\newcommand{\ra}{\rightarrow}
\theoremstyle{plain}
\newtheorem{claim}{Claim}[theorem]
\newtheorem{observation}[theorem]{Observation}
\newcommand{\data}{\textbf{Input: }}
\newcommand{\quest}{\textbf{Output: }}
\newcommand{\res}{\textbf{Result: }}
\newcommand{\opt}{\textbf{Optimization: }}
\newenvironment{oproblem}[4]{\begin{figure}[H]\begin{adjustwidth}{0.5cm}{}{\sc#1}\\ \data #2\\ \res #3\\ \opt #4}{\end{adjustwidth}\end{figure}\vspace{-1em}}
\newcommand{\x}{X}
\newcommand{\Vx}{V_\x}
\newcommand{\Ax}{A_\x}
\newcommand{\p}{U}
\newcommand{\B}{B}
\newcommand{\M}{M}
\newcommand{\Ap}{A'}
\newcommand{\Apb}{\overline{\Ap}}
\newcommand{\Up}{U'}
\newcommand{\Upb}{\overline{\Up}}
\newcommand{\Uf}{\p[F]}
\newcommand{\ord}{\sigma}
\newcommand{\ba}{b}
\newcommand{\lo}{<_\ord}
\newcommand{\Gf}{G'}
\newcommand{\Vf}{V'}
\newcommand{\Af}{A'}
\newcommand{\Pip}{\Pi'}
\newcommand{\T}{{\cal T}}
\newcommand{\V}{\sigma}
\renewcommand{\t}{\delta}
\renewcommand{\S}{\Delta}
\newcommand{\SC}{O}
\newcommand{\Tv}{T_{v}}
\newcommand{\Tc}{T_{c}}
\newcommand{\Vv}{\V}
\newcommand{\Vc}{\V}
\newcommand{\Vcv}{\V^{cv}}
\newcommand{\SVVV}{\S_{V,V,V}}
\newcommand{\SVVC}{\S_{V,V,C}}
\newcommand{\SVCC}{\S_{V,C,C}}
\newcommand{\SCCC}{\S_{C,C,C}}
\newcommand{\SdVC}{\S_{2V,C}}
\newcommand{\SVdC}{\S_{V,2C}}
\newcommand{\StV}{\S_{3V}}
\newcommand{\StC}{\S_{3C}}
\newcommand{\So}{\S}
\newcommand{\SoVVV}{\So_{V,V,V}}
\newcommand{\SoVVC}{\So_{V,V,C}}
\newcommand{\SoVCC}{\So_{V,C,C}}
\newcommand{\SoCCC}{\So_{C,C,C}}
\newcommand{\SodVC}{\So_{2V,C}}
\newcommand{\SoVdC}{\So_{V,2C}}
\newcommand{\SotV}{\So_{3V}}
\newcommand{\SotC}{\So_{3C}}
\newcommand{\Sv}{\S^{v}}
\newcommand{\Sc}{\S^{c}}
\newcommand{\CVV}{\SC_{V^*}}
\newcommand{\CV}{\SC_{V}}
\newcommand{\CCC}{\SC_{C^*}}
\newcommand{\CC}{\SC_{C}}
\newcommand{\CVC}{\SC_{V^*,C^*}}
\newcommand{\maxtst}{{\sc MaxTST}}
\newcommand{\maxcst}{{\sc MaxCST}}
\newcommand{\maxtt}{{\sc MaxTT}}
\newcommand{\kmaxtt}{$k$-\maxtt}
\newcommand{\maxct}{{\sc MaxCT}}
\newcommand{\kmaxct}{$k$-\maxct}
\newcommand{\maxtttight}{{\sc Tight-MaxTT}}
\newcommand{\maxcttight}{{\sc Tight-MaxCT}}
\newcommand{\tsatt}{{\sc 3-SAT(3)}}
\newcommand{\NP}{{\sf NP}}
\newcommand{\FPT}{{\sf FPT}}
\newlength{\mylfA}
\newlength{\myhfA}
\newcommand{\fA}[1]{%
\settowidth{\mylfA}{#1}
\def\sizefB{\the\mylfA}
\settoheight{\myhfA}{#1}
\def\heightfB{\the\myhfA}
\hspace{-0.35em}
\begin{tikzpicture}[baseline=(O.base)] 
\draw[ -> ](0,\heightfB)  -- ( \sizefB, \heightfB);
\node (O) at (\sizefB/2,\heightfB/7-0.01ex) {$#1$};
\end{tikzpicture}%
\hspace{-0.35em}
}
\newcommand{\bA}[1]{%
\settowidth{\mylfA}{#1}
\def\sizefB{\the\mylfA}
\settoheight{\myhfA}{#1}
\def\heightfB{\the\myhfA}
\hspace{-0.35em}
\begin{tikzpicture}[baseline=(O.base)]
\draw[ <- ](0,\heightfB)  -- ( \sizefB, \heightfB);
\node (O) at (\sizefB/2,\heightfB/7-0.01ex) {$#1$};
\end{tikzpicture}%
\hspace{-0.35em}
}
\title{(Arc-disjoint) cycle packing in tournament: classical and parameterized complexity}%\footnote{This work was partially supported by someone.}}
\author[1]{St\'ephane Bessy}
\author[1]{Marin Bougeret}
\author[1]{Jocelyn Thiebaut}
\affil[1]{Universit\'e de Montpellier, LIRMM, CNRS, Montpellier, France\\
  \texttt{\{bessy, bougeret, thiebaut\}@lirmm.fr}}
\authorrunning{S.\,Bessy,  M.\,Bougeret and J.\,Thiebaut} %mandatory. First: Use abbreviated first/middle names. Second (only in severe cases): Use first author plus 'et. al.'
\subjclass{F.2.2 - Computations on discrete structures, G.2.2 - Graph algorithms}% mandatory: Please choose ACM 1998 classifications from http://www.acm.org/about/class/ccs98-html . E.g., cite as "F.1.1 Models of Computation". 
\keywords{Tournament, arc-disjoint triangle packing, arc-disjoint cycle packing, {\sc NP}-hardness, parameterized complexity}% mandatory: Please provide 1-5 keywords
\begin{document}

\maketitle

\begin{abstract}
Given a tournament ${\cal T}$, the problem \maxct~consists of finding a maximum (arc-disjoint) cycle packing of $\T$. In the same way, \maxtt~corresponds to the specific case where the collection of cycles are triangles (i.e. directed 3-cycles). Although \maxct~can be seen as the LP dual of minimum feedback arc set in tournaments which have been widely studied, surprisingly no algorithmic results seem to exist concerning the former. 

In this paper, we prove the \NP-hardness of both \maxct~and \maxtt. We also show that deciding if a tournament has a cycle packing and a feedback arc set with the same size is an \NP-complete problem. In light of this, we show that \maxtt~admits a vertex linear-kernel when parameterized with the size of the solution. Finally, we provide polynomial algorithms for \maxtt~and \maxct~when the tournament is sparse, that is when it admits a FAS which is a matching.  
 \end{abstract}

\section{Introduction and related work}
\subsection*{Tournament}
A tournament $\T$ on $n$ vertices is an orientation of the complete graph $K_n$.  
%Thus, given a tournament $\T=(V,A)$, where $V = \{v_i, i\in [n]\}$, %for each $i,j \in [n]$, we have either $v_iv_j \in A$ or $v_jv_i \in %A$.  
A tournament $\T=(V,A)$ can alternatively be defined by an ordering $\V(\T)=(v_1,\dots,v_n)$ of its vertices and a set of \emph{backward arcs} $\bA{A}_{\V}(\T)$
(which will be denoted $\bA{A}(\T)$ as the considered ordering is not
ambiguous), where each arc $a \in \bA{A}(\T)$ is of the form
$v_{i_1}v_{i_2}$ with $i_2 < i_1$.  Indeed, given $\V(\T)$ and
$\bA{A}(\T)$, we define $V = \{v_i, i\in [1,n]\}$ and $A= \bA{A}(\T)
\cup \fA{A}(\T)$ where $\fA{A}(\T) = \{v_{i_1}v_{i_2} : (i_1 <
i_2) \mbox{ and } v_{i_2}v_{i_1} \notin \bA{A}(\T)\}$ is the set of
forward arcs of $\T$ in the given ordering $\V(\T)$. In the following,
$(\V(\T), \bA{A}(\T))$ is called a \emph{linear representation} of the
tournament $\T$.

 A set $A'\subseteq A$ of arcs of $\T$ is a \emph{feedback
  arc set} (or \emph{FAS}) of $\T$ if every directed cycle of $\T$
contains at least one arc of $A'$. It is clear that for any linear
representation $(\V(\T), \bA{A}(\T))$ of $\T$ the set $\bA{A}(\T)$ is
a FAS of $\T$. A tournament is \emph{sparse} if it admits a FAS which
is a matching. 

\subsection*{Considered problems}

We denote by \maxct~the problem of packing the maximum number of arc-disjoint directed cycles in a given tournament. More formally, an input of \maxct~is a tournament $\T$,
an output is a set $\SC=\{c_i, i \in
[|\SC|]\}$, called \emph{cycle packing}, where each $c_i$ is a cycle and for any $i \neq j$ we have
$A(c_i) \cap A(c_j) = \emptyset$ (where $A(c)$ is the set of arcs used in $c$), and the objective is to maximize
$|\SC|$. We denote by \maxtt~the specific case of \maxct~where all cycles $c_i$ are triangles (i.e. directed 3-cycles). In that case, we call the output a \emph{triangle packing}. \\
Moreover, we also consider the parameterized version of the problems, that is \kmaxct{} (resp. \kmaxtt) where, given a tournament $\T$ and a parameter $k$, one has to find if there is a cycle packing (resp. triangle packing) of size at least $k$.

\subsection*{Related work}

%We refer the reader to the appendix of the paper where we precisely recall the definitions of the problems mentioned below as well as the standard definitions about parameterized complexity and approximation.

% We refer the reader to the extended version of the paper~\cite{extBBT18} where we precisely recall the definitions of the problems mentioned bellow as well as the standard definitions about parameterized complexity and approximation.

First it is notable than \maxct~is the dual problem (in the LP sense) of the classical problem {\sc Feedback Arc Set in Tournament} ({\sc FAST}). In this later problem the input is a tournament $\T$ and the output is a FAS $A'$ of $\T$, the objective being to minimize $|A'|$. {\sc FAST} is motivated by numerous practical applications, for instance in voting theory~\cite{borda1781,Condorcet1785},
machine learning~\cite{CohenSS97}, search engine ranking~\cite{DworkKNS01}, and has been intensively studied: \NP-completeness~\cite{Alon06,CTY07}, approximation~\cite{Kenyon-MathieuS07}, \FPT~algorithms~\cite{AlonLS09,Feige09}, kernelization~\cite{BessyFGPPST11}.
However surprisingly, no algorithmic result seems to exist concerning its LP dual, \maxct.\\
On the other hand \maxct~is a particular case of cycle packing in digraphs which is known as an \NP-hard problem (see~\cite{bookJoergen} p.551 for instance) which has been extensively studied too \cite{KrivelevichNY05,Lucchesi78}, since finding arc-disjoint cycles in (di)graphs has many practical applications (for example in biology \cite{Capara03, Dorninger94}).

Alternatively, \maxtt~can be studied as a special case of {\sc 3-set-packing}, by creating the hypergraph on the arc set of the tournament and where each triangle becomes a hyperedge. The {\sc 3-set-packing} problem admits a $\frac{4}{3}+\varepsilon$ approximation~\cite{Cygan13}, implying the same result for \maxtt.
From a structural point of view the problem of partitioning the arc set of a digraph into a collection of triangle has been studied for regular tournaments~\cite{Yuster13}, almost regular tournaments~\cite{AkariaY15} and complete digraphs~\cite{Gardner97}.
%Autres refs/idees de refs:
%\begin{itemize}
%\item  Approx cycle packing (oriented) %\cite{KrivelevichNY05}
%\item packing triangle dans les graphes nono -> ref?
%\item packing de triangles dans des sous classes de %graphes: \begin{itemize}
%\item Packing de triangles dans les tournois reguliers 
%\item packing de triangles dans les digraphes complets (utile?)\cite{Gardner97} (lien direct car introuvable sinon : \url{https://faculty.etsu.edu/gardnerr/pubs/C15.pdf})
%\end{itemize}
%\item Non oriente : graph decomposition \NP complete \cite{DorT97}
%\item parler des sparses: classe specifique de tournois qui...
%\end{itemize}

In the same way, the class of tournaments received a lot of attention in the literature, mainly due to its numerous applications (voting systems \cite{marie1785essai, KarpinskiS10, HemaspaandraSV05}). This gives a new reason to be surprised by this lack of results concerning arc-disjoint cycles in tournaments.\\ 
Finally concerning \FPT~algorithms, few problems are known to admit a $O^*(2^{\sqrt{k}})$ when parameterized by the standard parameter $k$~\cite{pilipczuk2013tournaments}. The parameterized version of {\sc FAST} is one of them~\cite{AlonLS09,Feige09}. To the best of our knowledge, outside bidimensionality theory no packing problems are known to admit such an \FPT~algorithm, and maybe \maxtst~(or even \maxtst) could be a candidate for this and so deserve some attention, especially in the light of the $2^{o(\sqrt{k})}$ lower bound mentioned below.

\subsection*{Our contributions}
We prove in Section~\ref{nphardness} the \NP-hardness of both \maxtt~and \maxct. The reduction also implies \NP-hardness of the "tight" versions of these problems where we look for a packing of size equal to a given feedback arc set, and a $2^{o(\sqrt{k})}$ lower bound (under ETH) for the parameterized versions of these problems. Using the same kind of reduction we also show that deciding if a tournament has a cycle packing and a feedback arc set of the same size is an \NP-complete problem.
It implies in particular that there is no hope for an \FPT~algorithm for {\sc FAST} parameterized above the guaranteed value being the size of a maximal cycle packing~\cite{MahajanRS09}.

Then, from a parameterized point of view, we show in Section~\ref{kernel} that \kmaxtt~is \FPT~as there exists a $O^*(2^k)$ algorithm to solve it. We also show that \kmaxtt~admits a linear-vertex kernel. Finally, we focus on sparse tournaments in Section~\ref{polysparse}.
This class of tournaments is interesting for cycle packing problems. Indeed for instance the problem of packing vertex-disjoint triangles in sparse tournament is \NP~complete in it~\cite{BBT17}. Concerning packing of arc-disjoint cycles, we describe a polynomial algorithm to solve \maxtt~and \maxct~in sparse tournaments.

\section{Notations}
Given a linear representation
$(\V(\T),\bA{A}(\T))$ a triangle $t$ in $\T$ is a triple
$(v_{i_1},v_{i_2},v_{i_3})$ with $i_l < i_{l+1}$ such that either
$v_{i_3}v_{i_1} \in \bA{A}(\T)$, $v_{i_3}v_{i_2} \notin \bA{A}(\T)$
and $v_{i_2}v_{i_1} \notin \bA{A}(\T)$ (in this case we call $t$ a
\emph{triangle with backward arc} $v_{i_3}v_{i_1}$), or
$v_{i_3}v_{i_1} \notin \bA{A}(\T)$, $v_{i_3}v_{i_2} \in \bA{A}(\T)$
and $v_{i_2}v_{i_1} \in \bA{A}(\T)$ (in this case we call $t$ a
\emph{triangle with two backward arcs} $v_{i_3}v_{i_2}$ and
$v_{i_2}v_{i_1}$). We denote by $V(t)=\{v_{i_1},v_{i_2},v_{i_3}\}$ the
vertices of $t$ and by $A(t)$ the set of its arcs. We extend the notation to $V(S)$ and $A(S)$ where $S$ is a set
of triangles.

Given an arc $a=uv$ we define $h(a)=v$ as the head of $a$ and $t(a)=u$ as the tail of $a$. Given two tournaments $\T_1, \T_2$ defined by $\V(\T_l)$ and $\bA{A}(\T_l)$ with $l\in{1,2}$, we denote by $\T=\T_1\T_2$ the tournament called the concatenation of $\T_1$ and $\T_2$, where $\V(\T) = \V(\T_1)\V(\T_2)$ is the concatenation of the two sequences, and $\bA{A}(\T) = \bA{A}(\T_1) \cup \bA{A}(\T_2)$. Given a tournament $\T$ and a subset of vertices $X$, we denote by $\T[X]$ the tournament induced by the vertices of $X$. Moreover, we denote by $\T \setminus X$ the tournament 
$\T[V(\T) \setminus X]$ and say that this tournament is obtained by \emph{removing $X$ from $\T$}.
Given an ordering $\V$ and two vertices $u$ and $v$, we write $u \lo v$ if $u$ is before $v$ in $\V$. Finally, in the following, we will simply write $[n]$ instead of $[1,n]$.

\section{\NP~hardness}
\label{nphardness}
\subsection{Reduction}
We prove the \NP~hardness of \maxtt~using a reduction from \tsatt~(i.e. where each clause has at most 3 literals, and each literal appears at most two times positively and exactly one negative negatively). In the following, denote by $F$ the input formula of an instance of \tsatt. Let $n$ be the number of its variables and $m$ be the number of its clauses. We may suppose that $n\equiv 1, 3 \pmod{6}$ and $m + 1 \equiv 1, 3 \pmod{6}$. Indeed, if its not the case we first add new (unused) variables to get $n \equiv 1, 3\pmod{6}$. Then, as long as $m + 1 \not\equiv 1, 3 \pmod{6}$, we add 6 new variables (to preserve $n \equiv 1, 3
\pmod{6}$)), and one (or two if necessary) clauses using these 6 new
variables. From $F$ we construct a tournament $\T$ which is the concatenation of two tournaments $\Tv$ and $\Tc$ defined below. 

In the following, let $f$ be the reduction that maps an instance $F$ of \tsatt~to a tournament $\T$ we describe now.

\paragraph*{The variable tournament $\Tv$} For each variable $v_i$ of $F$, we define a tournament $V_i$ of order 6 as follows: $\Vv_i(V_i) = (r_i, \bar{x}_i,  x^1_i, s_i, x^2_i , t_i)$ and $\bA{A}_{\Vv}(V_i) = \{s_ir_i, t_ix^1_i\}$. \autoref{fig:gadgetvariable} is a representation of one variable gadget $V_i$. One can notice that the FAS of $V_i$ corresponds exactly to its backward set. We  now define $V(\Tv)$ be the union of the vertex sets of the $V_i$'s and we equip $\Tv$ with the order $\Vv_1\Vv_2\dots \Vv_n$.
Thus, $\Tv$ has $6n$ vertices. We also add the following backward arcs to $\Tv$. Since $n \equiv 1,3 \pmod{6}$, the edges of  the $n$-clique $K_n$ can be packed into (undirected) triangles in polynomial time such that every edge is in exactly one triangle~\cite{SteinerTriples}. Let $\{u_1, \dots, u_n\}$ be an arbitrary enumeration of the vertices of $K_n$. Using a perfect triangle packing $\S_{K_n}$ of $K_n$, we create a tournament $T_{K_n}$ such that $\V'(T_{K_n}) = (u_1, \dots, u_n)$ and $\bA{A}_{\V'}(T_{K_n}) = \{t_ls_l : (s_l, r_l, t_l)$ is a triangle of $\S_{K_n}$ with $s_l <_{\V'} r_l <_{\V'} t_l\}$. Now we set $\bA{A}_{\Vv}(\Tv) = \{uv : u\in V_i, v\in V_j$ for $i\neq j$ and $u_ju_i \in \bA{A}_{\V'}(T_{K_n})\} \cup \bigcup_{i=1}^{n} \bA{A}_{\Vv}(V_i)$. In some way, we ``blew up'' every vertex $u_i$ of $T_{K_n}$ into our variable gadget $V_i$. 
\begin{figure}%
\centering
\includegraphics[width=0.5\columnwidth]{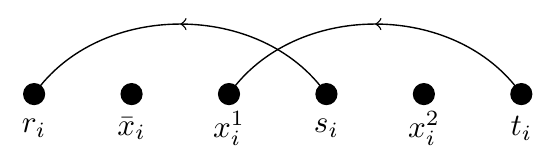}%
\caption{The variable gadget $V_i$. Only backward arcs are depicted, so all the remaining arcs are forward arcs.}%
\label{fig:gadgetvariable}%
\end{figure}

    \paragraph*{The clause tournament $\Tc$} For each of the $m$ clauses $c_j$ of $F$, we define a tournament $C_j$ of order 3 as follows: $\Vc(C_j) = (c^1_j, c^2_j, c^3_j)$ and $\bA{A}_{\Vc}(C_j) = \emptyset$. In addition, we have a $(m+1)^{th}$ tournament denoted by $C_{m+1}$ and defined by $\Vc(C_{m+1}) = (c^1_{m+1}, c^2_{m+1}, c^3_{m+1})$ and $\bA{A}_{\Vc}(C_{m+1}) = \{c^3_{m+1}c^1_{m+1}\}$, that is $C_{m+1}$ is a triangle. We call this triangle the \emph{dummy triangle} , and its vertices the \emph{dummy vertices}. We now define $\Tc$ such that $\Vc(\Tc)$ is the concatenation of each ordering $\Vc(C_j)$ in the natural order, that is $\Vc(\Tc) = (c^1_1, c^2_1, c^3_1,\dots, c^1_m, c^2_m, c^3_m, c^1_{m+1}, c^2_{m+1}, c^3_{m+1})$. So $\Tc$ has $3(m+1)$ vertices. Since $m+1 \equiv 1,3 \pmod{6}$, we use the same trick as above to add arcs to $\bA{A}_{\Vc}(\Tc)$ coming from a perfect packing of undirected triangles of $K_{m+1}$. Once again, we ``blew up'' every vertex $u_j$ of $T_{K_{m+1}}$ into our clause gadget $C_j$. 

\paragraph*{The tournament $\T$}
To define our final tournament $\T$ let us begin with its ordering $\V$ defined by $\V(\T) = \Vv(\Tv)\Vc(\Tc)$. Then we construct $\bA{A}^{vc}(\T)$ the backward arcs between $\Tc$ and $\Tv$. For any $j\in[m]$, if the clause $c_j$ in $F$ has three literals, that is $c_j = l_{1} \vee l_{2} \vee l_{3}$, then we add to $\bA{A}^{vc}(\T)$ the three backward arcs $c^3_j z_{u}$ where $u\in[3]$ and such that $z_{u} = \bar{x}_{i_u}$ when $l_{u}=\bar{v_{i_u}}$, and $z_{u} \in \{x^1_{i_u}, x^2_{i_u}\}$ when $l_{u}=\bar{v_{i_u}}$ in such a way that for any $i \in [n]$, there exists an unique arc $a \in \bA{A}^{vc}(\T)$ with $h(a) = x^1_{i}$. Informally, in the previous definition, if $x^1_{i_u}$ is already "used" by another clause, we chose $z_{u}=x^2_{i_u}$. Such an orientation will always be possible since each variable occurs at most two times positively and once negatively in $F$. If the clause $c_j$ in $F$ has only two literals, that is $c_j = l_{1} \vee l_{2}$, then we add in $\bA{A}^{vc}(\T)$ the two backward arcs $c^2_j z_{u}$ where $u\in[2]$ and such that $z_{u} = \bar{x}_{i_u}$ when $l_{u}=\bar{v_{i_u}}$ and $z_{u} \in \{x^1_{i_u}, x^2_{i_u}\}$ when $l_{u}=v_{i_u}$ in such a way that for any $i \in [n]$, there exists an unique arc $a \in \bA{A}^{vc}(\T)$ with $h(a) = x^1_{i}$.\\
Finally, we add in $\bA{A}^{vc}(\T)$ the backward arcs $c^u_{m+1}\bar{x}_i$ for any $u\in[3]$ and $i\in[n]$. These arcs are called \emph{dummy arcs}. We set $\bA{A}_\V(\T) = \bA{A}_{\Vv}(\Tv) \cup \bA{A}_{\Vc}(\Tc) \cup \bA{A}^{vc}(\T)$. Notice that each $\bar{x}_i$ has exactly four arcs $a \in\bA{A}_\V(\T) $ such that $h(a) =\bar{x}_i$ and $t(a)$ is a vertex of $\Tc$. To finish the construction, notice also that $\T$ has $6n + 3(m+1)$ vertices and can be computed in polynomial time. \autoref{fig:examplereduc} is an example of the tournament obtained from a trivial \tsatt~ instance.
 
 \begin{figure}%
\centering
\includegraphics[width=\columnwidth]{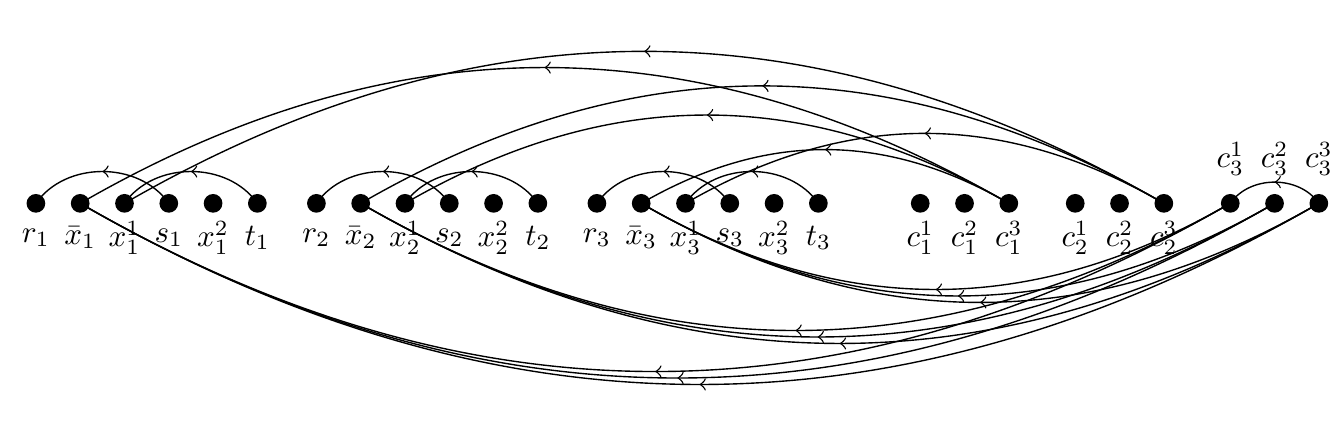}%
\caption{Example of reduction obtained when $F = \{c_1,c_2\}$ where $c_1 = \bar{v}_1 \vee v_2 \vee \bar{v}_3$ and $c_2 = v_1 \vee \bar{v}_2 \vee v_3$. Forward arcs are not depicted. In addition to the depicted backward we have the $36$ backward arcs from $V_3$ to $V_1$, and the $9$ backward arcs from $C_3$ to $C_1$.}%
\label{fig:examplereduc}%
\end{figure}

\subsection{Proof of the reduction}
First of all, observe that in each variable gadget $V_i$, there are only four triangles: let $\t^1_i$, $\t^2_i$, $\t^3_i$ and $\t^4_i$ be the triangles $(r_i,\bar{x}_i, s_i)$, $(r_i, x^1_i,s_i)$, $(x^1_i, s_i, t_i)$ and $(x^1_i, x^2_i, t_i)$, respectively. 
Moreover, notice that there are only three maximal triangle packing of $V_i$ which are $\{\t^1_i, \t^3_i\}$, $\{\t^1_i, \t^4_i\}$ and $\{\t^2_i, \t^4_i\}$. We respectively call these packings  $\S^{\top}_i $, $\S^{\top'}_i$ and $\S^{\bot}_i$.

Given a triangle packing $\S$ of $\T$ and a subset $X$ of vertices, we define for any $x \in X$ the $\S$-local out-degree of the vertex $x$, denoted $d^+_{X\setminus\S}(x)$, as the remaining out-degree of $x$ in $\T[X]$ when we remove the arcs of the triangles of $\S$. More formally, we set: $d^+_{X\setminus \S}(x) = |\{xa : a \in X, xa\in A[X], xa\notin A(\S)\}|$.

\begin{observation}
\label{localoutdegree}
Given a gadget variable $V_i$, we have: 
\begin{enumerate}[(i)]
\item $d^+_{V_i \setminus \S^{\top}_i}(x^1_i) = d^+_{V_i\setminus\S^{\top}_i}(x^2_i) = 1 $ and $d^+_{V_i\setminus\S^{\top}_i}(\bar{x}_i) =3 $,
\item $d^+_{V_i\setminus\S^{\top'}_i}(x^1_i) = 1$, $d^+_{V_i\setminus\S^{\top'}_i}(x^2_i) = 0 $ and $d^+_{V_i\setminus\S^{\top'}_i}(\bar{x}_i) =3 $,
\item $d^+_{V_i\setminus\S^{\bot}_i}(x^1_i) = d^+_{V_i\setminus\S^{\bot}_i}(x^2_i) = 0 $ and  $d^+_{V_i\setminus\S^{\bot}_i}(\bar{x}_i) =4 $,
\item none of $\bar{x}_ix^1_i$, $\bar{x}_is_i$, $\bar{x}_it_i$ belongs to $\S^{\top}_i$ or $\S^{\bot}_i$.
\end{enumerate}
\end{observation}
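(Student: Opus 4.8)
The plan is to treat the statement as a short, finite verification inside the six-vertex gadget $V_i$; the only work is to lay out the bookkeeping so the case analysis stays short.

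First I would write down the complete arc set of $V_i$ from its linear representation $\Vv_i(V_i)=(r_i,\bar x_i,x^1_i,s_i,x^2_i,t_i)$, $\bA{A}_{\Vv}(V_i)=\{s_ir_i,t_ix^1_i\}$. Apart from the two backward arcs, every pair is a forward arc, so the arcs of $V_i$ leaving $x^1_i$, $x^2_i$, $\bar x_i$ (the only vertices mentioned in the observation) are exactly: $x^1_i s_i$ and $x^1_i x^2_i$ out of $x^1_i$; the single arc $x^2_i t_i$ out of $x^2_i$; and $\bar x_i x^1_i$, $\bar x_i s_i$, $\bar x_i x^2_i$, $\bar x_i t_i$ out of $\bar x_i$ (so the in-gadget out-degrees of $x^1_i$, $x^2_i$, $\bar x_i$ are $2$, $1$, $4$). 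Next I would record the arc sets of the four triangles named just before the observation, namely $A(\t^1_i)=\{r_i\bar x_i,\bar x_i s_i,s_ir_i\}$, $A(\t^2_i)=\{r_ix^1_i,x^1_i s_i,s_ir_i\}$, $A(\t^3_i)=\{x^1_i s_i,s_it_i,t_ix^1_i\}$, $A(\t^4_i)=\{x^1_ix^2_i,x^2_it_i,t_ix^1_i\}$, whence $A(\S^{\top}_i)=A(\t^1_i)\cup A(\t^3_i)$, $A(\S^{\top'}_i)=A(\t^1_i)\cup A(\t^4_i)$, $A(\S^{\bot}_i)=A(\t^2_i)\cup A(\t^4_i)$.

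For (i)--(iii) I would then, for each of the three packings and each of the vertices $x^1_i,x^2_i,\bar x_i$, delete the packing's arc set from the list of outgoing arcs above and count what is left. For example $A(\S^{\top}_i)$ contains $x^1_i s_i$ but not $x^1_i x^2_i$, does not meet $x^2_i t_i$, and contains $\bar x_i s_i$ but none of $\bar x_i x^1_i,\bar x_i x^2_i,\bar x_i t_i$, which gives the values $1,1,3$ of (i); $\S^{\top'}_i$ and $\S^{\bot}_i$ are dispatched in the same way. Part (iv) is then read off directly from the four arc-set lists. The statement has no genuine obstacle: the only point requiring care is keeping the orientation of each pair straight — which arc of $\{u,v\}$ is actually present is forced by whether $\{u,v\}$ is one of the two backward arcs — so that an arc used by a triangle of the packing is not accidentally counted among the surviving out-arcs.
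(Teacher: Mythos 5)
Your method --- writing out the complete arc set of the six-vertex gadget $V_i$, listing $A(\t^1_i),\dots,A(\t^4_i)$, and counting the surviving out-arcs of $x^1_i$, $x^2_i$, $\bar{x}_i$ for each of the three packings --- is exactly the right approach; the paper states this as an observation with no proof at all, and your verification of items (i)--(iii) is correct.

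However, your treatment of (iv) does not go through and in fact contradicts your own bookkeeping. You correctly record $\bar{x}_i s_i \in A(\t^1_i)$, and you rely on precisely this membership to obtain $d^+_{V_i\setminus\S^{\top}_i}(\bar{x}_i)=3$ in (i): the value $3$ (rather than $4$) arises because $\S^{\top}_i=\{\t^1_i,\t^3_i\}$ consumes the out-arc $\bar{x}_i s_i$. Item (iv), as printed, asserts that $\bar{x}_i s_i$ does \emph{not} belong to $\S^{\top}_i$, so it cannot be ``read off directly from the four arc-set lists''; taken literally it is false for $\S^{\top}_i$ (and also for $\S^{\top'}_i$, which contains $\t^1_i$ as well), and only the part concerning $\S^{\bot}_i$ and the arcs $\bar{x}_i x^1_i$, $\bar{x}_i t_i$ survives your lists. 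The claim that is both true and actually needed later (when the triangles $(\bar{x}_i,t_i,c^1_{m+1})$, $(\bar{x}_i,x^1_i,c^2_{m+1})$, $(\bar{x}_i,x^2_i,c^3_{m+1})$ consuming the dummy arcs are added) is that none of $\bar{x}_i x^1_i$, $\bar{x}_i x^2_i$, $\bar{x}_i t_i$ is used by any of the three packings --- equivalently, the only in-gadget out-arc of $\bar{x}_i$ used by $\S^{\top}_i$ or $\S^{\top'}_i$ is $\bar{x}_i s_i$, and $\S^{\bot}_i$ uses none. You should have flagged this discrepancy and verified the corrected statement, rather than asserting that the printed one follows from lists which explicitly refute it.
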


Informally, we want to set the variable $x_i$ at true (resp. false) when  one of the locally-optimal $\S^{\top'}_i$ or $\S^{\top}_i$ (resp. $\S^{\bot}_i$) is taken in the variable gadget $V_i$ in the global solution.

Now given a triangle packing $\S$ of $\T$, we partition $\S$ into the following sets:
\begin{itemize}
	\item $\SVVV = \{(a,b,c) \in \S : a\in V_i,\ b \in V_j,\ c\in V_k$ with $i<j<k\}$,
	\item $\SVVC = \{(a,b,c) \in \S: a\in V_i,\ b \in V_j,\ c\in C_k$ with $i<j\}$,
	\item $\SVCC = \{(a,b,c) \in \S: a\in V_i,\ b \in C_j,\ c\in C_k$ with $j<k\}$,
	\item $\SCCC = \{(a,b,c) \in \S: a\in C_i,\ b \in C_j,\ c\in C_k$ with $i<j<k\}$,
	\item $\SdVC= \{(a,b,c) \in \S: a,\ b\in V_i,\ c \in C_j\}$,
	\item $\SVdC = \{(a,b,c) \in \S: a\in V_i, b,\ c \in C_j\}$,
	\item $\StV = \{(a,b,c) \in \S: a,\ b,\ c\in V_i\}$,
	\item $\StC = \{(a,b,c) \in \S: a,\ b,\ c\in C_i\}$.
\end{itemize}

Notice that in $\T$, there is no triangle with two vertices in a variable gadget $V_i$ and its third vertex in a variable gadget $V_j$ with $i\neq j$ since all the arcs between two variable gadget are oriented in the same direction. We have the same observation for clauses.\\
In the two next lemmas, we prove some properties concerning the solution $\S$. 

\begin{lemma}
\label{metaclique}
There exists a triangle packing $\Sv$ (resp. $\Sc$) which use exactly the arcs between distinct variable gadgets (resp. clause gadgets). Therefore, we have $|\SVVV| \leq 6n(n-1)$ and $|\SCCC| \leq 3m(m+1)/2$ and these bounds are tight.
\end{lemma}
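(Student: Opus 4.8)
The plan is to construct the packing $\Sv$ explicitly — the construction of $\Sc$ being word for word the same with $6$ replaced by $3$, $n$ by $m+1$, and $\S_{K_n}$ by $\S_{K_{m+1}}$ — and then to read off the two inequalities from a one‑line arc count. First I would unwind the blow‑up. By construction the backward arcs of $T_{K_n}$ are exactly the arcs $t_ls_l$, one for each triangle $(s_l,r_l,t_l)$ of the perfect triangle packing $\S_{K_n}$ of $K_n$; hence the three edges of such a triangle become a directed $3$‑cycle $s_l\to r_l\to t_l\to s_l$ of $T_{K_n}$, and these directed triangles partition $A(T_{K_n})$. After blowing up every $u_i$ into the $6$‑vertex gadget $V_i$, any two gadgets $V_i,V_j$ are joined by $6\times 6=36$ arcs, all oriented the same way, so there are exactly $36\binom n2=18n(n-1)$ arcs between distinct variable gadgets, and these arcs split — according to which triangle of $\S_{K_n}$ the pair $\{u_i,u_j\}$ belongs to — into $\binom n2/3=n(n-1)/6$ pairwise disjoint blocks of $108$ arcs. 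For the triangle $(s_l,r_l,t_l)$ blown up into gadgets $V_a,V_b,V_c$, the corresponding block is exactly the set of all arcs from $V_a$ to $V_b$, all arcs from $V_b$ to $V_c$, and all arcs from $V_c$ to $V_a$; in particular every directed triangle with one vertex in each of $V_a,V_b,V_c$ has the form $x\to y\to z\to x$ with $(x,y,z)\in V_a\times V_b\times V_c$.

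The core of the argument is to decompose one such $108$‑arc block into $36$ arc‑disjoint triangles of this form. Identify the vertex sets of $V_a$, $V_b$, $V_c$ with $\Z_6$, write $x_i,y_j,z_k$ for the vertices labelled $i,j,k$, and take the $36$ triangles $x_i\to y_j\to z_{i+j}\to x_i$, one for each $(i,j)\in\Z_6\times\Z_6$. The arc they use from $V_a$ to $V_b$ is $(x_i,y_j)$, and $(i,j)$ runs over all of $\Z_6\times\Z_6$; the arc used from $V_b$ to $V_c$ is $(y_j,z_{i+j})$, and for every $(j,k)$ there is a unique $i$, namely $i=k-j$, with $i+j=k$, so these arcs run over all of $V_b\times V_c$; symmetrically the arcs used from $V_c$ to $V_a$ exhaust $V_c\times V_a$. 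Hence the $36$ triangles are pairwise arc‑disjoint and together use exactly the $108$ arcs of the block. (This is just the classical correspondence between triangle decompositions of $K_{6,6,6}$ and Latin squares of order $6$, here realised by the Cayley table of $\Z_6$.)

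Performing this decomposition in every block and taking the union over the $n(n-1)/6$ blocks produces a set $\Sv$ of triangles of $\Tv$ that is arc‑disjoint (the blocks being pairwise disjoint) and uses exactly the $18n(n-1)$ arcs between distinct variable gadgets; therefore $|\Sv|=18n(n-1)/3=6n(n-1)$. For the upper bound, every triangle of $\SVVV$ has its three vertices in three distinct variable gadgets, so all three of its arcs lie between distinct variable gadgets; since the triangles of $\S$ are arc‑disjoint there are at most $18n(n-1)$ arcs to share, whence $|\SVVV|\le 18n(n-1)/3=6n(n-1)$, with equality for $\S=\Sv$ (taking $\S=\Sv\cup\Sc$, which is arc‑disjoint, makes both bounds tight at once). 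Running the same argument with the $3$‑vertex clause gadgets, $\S_{K_{m+1}}$ and the Cayley table of $\Z_3$ gives $\Sc$ using exactly the $9\binom{m+1}2=9m(m+1)/2$ arcs between distinct clause gadgets, so $|\Sc|=3m(m+1)/2$ and likewise $|\SCCC|\le 3m(m+1)/2$. The only genuinely non‑routine step is the modular decomposition of a block into triangles; everything else is counting.
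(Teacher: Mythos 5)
Your proof is correct and follows essentially the same route as the paper: you decompose each blown-up triangle of the perfect packing of $K_n$ (a tripartite tournament $K_{6,6,6}$) into arc-disjoint triangles via the $(x_i,y_j,z_{i+j \bmod 6})$ construction, exactly the paper's $\{(f_i,g_j,h_{i+j\pmod 6})\}$ packing, and analogously with $\Z_3$ for the clause gadgets. The only difference is that you spell out the arc-counting that yields the upper bounds $|\SVVV|\le 6n(n-1)$ and $|\SCCC|\le 3m(m+1)/2$, which the paper leaves implicit; this is a welcome but minor addition.
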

\begin{proof}
First recall that the tournament $\Tv$ is constructed from a tournament $T_{K_n}$ which admits a perfect packing of $n(n-1)/6$ triangles. Then we replaced each vertex $u_i$ in $T_{K_n}$ by the variable gadget $V_i$ and kept all the arcs between two variable gadget $V_i$ and $V_j$ in the same orientation as between $u_i$ and $u_j$.
Let $u_iu_ju_k$ be a triangle of the perfect packing of $T_{K_n}$. We temporaly relabel the vertices of $V_i$, $V_j$ and $V_k$ respectively by $\{f_i, i\in[6]\}$, $\{g_i, i\in[6]\}$ and $ \{h_i, i\in[6]\}$ and consider the tripartite tournament $K_{6,6,6}$ given by $V(K_{6,6,6}) = \{f_i, g_i, h_i, i\in[6]\}$  and $A(K_{6,6,6}) = \{f_ig_j, g_ih_j, h_if_j : i,j\in[6]\}$.
Then it is easy to check that $\{(f_i, g_j, h_{i+j\pmod{6}}) : i,j \in[6]\}$ is a perfect triangle packing  of $K_{6,6,6}$. 
Since every triangle of $T_{K_n}$ becomes a $K_{6,6,6}$ in $\Tv$, we can find a triangle packing $\Sv$ which use all the arcs between disjoint variable gadgets.

We use the same reasoning to prove that there exists a triangle packing $\Sc$ which use all the arcs available in $\Tc$ between two distinct clause gadget.
\end{proof}

\begin{lemma}
\label{packingsize}
For any triangle packing $\S$ of the tournament $\T$, we have the following inequalities:
\begin{enumerate}[(i)]
	\item $|\SVVV|+|\SCCC| \leq 6n(n-1) + 3m(m+1)/2$
	\item $|\SdVC| + |\SVdC| + |\SVCC| +|\SVVC| \leq \alpha$, where $\alpha = |\bA{A}^{vc}(\T)|$
	\item $|\StV| \leq 2n$
	\item $|\StC| \leq 1$
\end{enumerate}
Therefore in total we have $|\S| \leq  6n(n-1) + 3m(m+1)/2 +  2n + \alpha + 1$.
\end{lemma}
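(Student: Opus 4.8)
The plan is to prove each of the four inequalities separately, since they partition the packing $\S$ into groups according to how the triangles' vertices distribute among the variable and clause gadgets, and then sum them. The overall bound follows by adding (i)--(iv) together with the obvious observation that every triangle of $\S$ falls into exactly one of the eight classes listed before the lemma (recall there is no triangle with exactly two vertices in one variable gadget and one in another, nor the clause analogue).

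For (i), the point is that $\SVVV$ and $\SCCC$ both consist of arc-disjoint triangles all of whose arcs lie \emph{between} distinct variable gadgets (resp. distinct clause gadgets). By Lemma~\ref{metaclique} the total number of such arcs is $6n(n-1)$ on the variable side and $3m(m+1)/2$ on the clause side; since the triangles in these two classes are arc-disjoint and each uses $3$ arcs, but more simply since these arc sets are disjoint from each other and a packing of arc-disjoint triangles inside an arc set of size $N$ (here the inter-gadget arcs) can have at most... actually one should just say: $3|\SVVV| \le$ (number of inter-variable-gadget arcs) $= 18n(n-1)$, giving $|\SVVV| \le 6n(n-1)$, and likewise $|\SCCC| \le 3m(m+1)/2$; adding gives (i). For (iv), $\StC$ can only use triangles entirely inside a single $C_i$; but each $C_j$ for $j \le m$ has empty backward set hence is transitive and contains no triangle, and $C_{m+1}$ is itself a single triangle, so $|\StC| \le 1$.

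For (iii), a triangle in $\StV$ lies entirely inside some $V_i$; from the analysis preceding Observation~\ref{localoutdegree} each $V_i$ contains only four triangles and any arc-disjoint subfamily has size at most $2$ (indeed the three maximal packings $\S^\top_i, \S^{\top'}_i, \S^\bot_i$ all have size $2$), so summing over the $n$ gadgets gives $|\StV| \le 2n$. The main obstacle is (ii): here one must argue that every triangle counted by $\SdVC + \SVdC + \SVCC + \SVVC$ consumes at least one arc of $\bA{A}^{vc}(\T)$, the backward arcs running from $\Tc$ down to $\Tv$, and that distinct such triangles consume distinct such arcs. The first part is a case check: such a triangle has at least one vertex in some $V_i$ and at least one vertex in some $C_j$, so it uses an arc between the variable part and the clause part; all forward arcs go from $\Tv$ to $\Tc$ and cannot close a cycle that also stays within, so a directed triangle crossing the cut must use a backward (i.e. $\Tc \to \Tv$) arc, and every backward arc across the cut lies in $\bA{A}^{vc}(\T)$. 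The second part is just arc-disjointness of the packing. Hence the map sending each such triangle to one of its arcs in $\bA{A}^{vc}(\T)$ is injective, giving the bound $\alpha = |\bA{A}^{vc}(\T)|$. I would double-check the subtle point that a crossing triangle cannot use \emph{two} distinct arcs among the four classes in a way that over- or under-counts—but since we only need an injection into $\bA{A}^{vc}(\T)$, picking any one backward crossing arc per triangle suffices. Finally, summing (i)--(iv) yields $|\S| \le 6n(n-1) + 3m(m+1)/2 + 2n + \alpha + 1$, as claimed.
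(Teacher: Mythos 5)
Your proposal is correct and follows essentially the same route as the paper: classify each triangle of $\S$ into the eight types, bound $|\StV|$ by the size-two maximal packings of each $V_i$, bound $|\StC|$ by the transitivity of the $C_j$'s plus the single dummy triangle, and charge each triangle of $\SdVC\cup\SVdC\cup\SVCC\cup\SVVC$ injectively to a backward arc of $\bA{A}^{vc}(\T)$. Your bound (i) via the explicit count $3|\SVVV|\le 18n(n-1)$ (and the clause analogue) is exactly the counting that underlies the paper's appeal to Lemma~\ref{metaclique}, so the arguments coincide.
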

\begin{proof}
Let $\S$ be a triangle packing of $\T$. Remind that we have: $|\S| = |\SVVV|+|\SVVC|+$ $|\SVCC|+|\SCCC|+|\SdVC|+|\SVdC|
+|\StV|+|\StC|$
Inequality $(i)$ comes from Lemma~\ref{metaclique}.
Moreover, we have $|\SdVC| + |\SVdC| + |\SVCC| +|\SVVC| \leq \alpha$ since every triangle of these sets consumes one backward arcs from $\Tc$ to $\Tv$. We have $|\StV| \leq 2n$ since we have at most 2 disjoint triangles in each variable gadget. Finally we also have $|\StC| \leq 1$ since the dummy triangle is the only triangle lying in a clause gadget.
\end{proof}

\begin{theorem}\label{thm:reduc}
$F$ is satisfiable if and only if there exists a triangle packing $\So$ of size $6n(n-1) +  3m(m+1)/2 +  2n + \alpha + 1$ in the tournament $\T$.
\end{theorem}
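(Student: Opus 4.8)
The plan is to prove the two directions separately, using the bound from Lemma~\ref{packingsize}(i)--(iv) as the "budget" that a size-$(6n(n-1)+3m(m+1)/2+2n+\alpha+1)$ packing must spend optimally in every block. Throughout, write $\Sigma^{\max}$ for this target size; a packing of size $\Sigma^{\max}$ is \emph{optimal}, and by Lemma~\ref{packingsize} an optimal packing $\So$ must simultaneously meet \emph{all four} inequalities with equality: $|\SoVVV|+|\SoCCC| = 6n(n-1)+3m(m+1)/2$, $|\SodVC|+|\SoVdC|+|\SoVCC|+|\SoVVC| = \alpha$, $|\SotV| = 2n$, and $|\SotC| = 1$.

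First I would do the easy direction ($\Rightarrow$). Given a satisfying assignment $\phi$, build $\So$ as follows: take the packing $\Sv$ from Lemma~\ref{metaclique} on all inter-variable-gadget arcs and $\Sc$ on all inter-clause-gadget arcs (contributing $6n(n-1)+3m(m+1)/2$); inside each $V_i$ put $\Sigma^{\top}_i$ or $\Sigma^{\top'}_i$ if $\phi(v_i)=\text{true}$ and $\Sigma^{\bot}_i$ if $\phi(v_i)=\text{false}$ (contributing $2n$ via $\StV$, and by Observation~\ref{localoutdegree}(iv) these do not touch the arcs $\bar{x}_ix^1_i$, $\bar{x}_is_i$, $\bar{x}_it_i$, and leave the needed $\Sigma$-local out-degrees at the head-vertices of the $\bA{A}^{vc}$-arcs); add the dummy triangle $C_{m+1}$ (contributing $1$); and finally, for every clause $c_j$, pick a literal $l_u$ satisfied by $\phi$ and form a triangle through the backward arc $c_j^{\bullet}z_u$ closing it with one forward arc into $z_u$ and one forward arc out of $c_j^{\bullet}$, using the spare local out-degree guaranteed by Observation~\ref{localoutdegree} (a true literal $v_{i}$ has $z_u\in\{x_i^1,x_i^2\}$ with out-degree $\ge 1$ available when $\Sigma^{\top}_i$ or $\Sigma^{\top'}_i$ is chosen; a true literal $\bar v_i$ has $z_u=\bar x_i$ with out-degree $\ge 3$ available when $\Sigma^{\bot}_i$ is chosen, plus the four $\Tc$-to-$\bar x_i$ arcs). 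This accounts for $\alpha$ more triangles, all arc-disjoint because distinct clauses use distinct $\bA{A}^{vc}$-arcs and we route each through fresh forward arcs; the dummy arcs $c^u_{m+1}\bar x_i$ are available to realize the $\bar x_i$ case. Total: exactly $\Sigma^{\max}$.

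For the converse ($\Leftarrow$), suppose $\So$ has size $\Sigma^{\max}$. By the remark above it is tight in all four inequalities. Tightness of $|\SotV|=2n$ forces each $V_i$ to contain a \emph{maximal} pair, i.e. one of $\Sigma^{\top}_i,\Sigma^{\top'}_i,\Sigma^{\bot}_i$; define $\phi(v_i)=\text{true}$ iff $V_i$ uses $\Sigma^{\top}_i$ or $\Sigma^{\top'}_i$. Tightness of $|\SodVC|+|\SoVdC|+|\SoVCC|+|\SoVVC|=\alpha$ means \emph{every} backward arc of $\bA{A}^{vc}(\T)$ is consumed by some triangle of $\So$; in particular each clause arc $c_j^{\bullet}z_u$ lies in a triangle $\tau$ of $\So$, and since $c_j^{\bullet}$ is the last vertex of $C_j$ and there are no $\bA{A}^{vc}$-arcs among the first two vertices of $C_j$, the third vertex of $\tau$ must be in a variable gadget and the triangle is of type $\SoVdC$ or $\SodVC$ (it uses the arc $z_u c$ or $z_u b$ forward into, wait — rather it uses one additional arc leaving $z_u$ inside $\Tv$ or going forward). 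The key point: such a triangle through $c_j^{\bullet}z_u$ must use an arc \emph{out of} $z_u$ that is \emph{not} already taken by the gadget packing inside $V_i$ nor by $\Sv$ — but $\Sv$ already saturates all inter-variable arcs and the chosen maximal pair in $V_i$ saturates the intra-$V_i$ arcs except those quantified in Observation~\ref{localoutdegree}. One then checks, using Observation~\ref{localoutdegree}(i)--(iii), that a triangle closing $c_j^{\bullet}z_u$ can exist only if the spare local out-degree of $z_u$ is positive, which happens exactly when: $z_u=\bar x_i$ and $V_i$ uses $\Sigma^{\bot}_i$ (so $\phi$ makes $\bar v_i$ true), or $z_u\in\{x_i^1,x_i^2\}$ and $V_i$ uses $\Sigma^{\top}_i$ or $\Sigma^{\top'}_i$ (so $\phi$ makes $v_i$ true) — with the $x_i^2$, $\Sigma^{\top'}_i$ sub-case excluded by out-degree $0$, which is why the construction reserved $x_i^1$ uniquely per variable. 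Hence every clause $c_j$ has a literal made true by $\phi$, so $F$ is satisfied.

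The main obstacle is the bookkeeping in the converse direction: showing that tightness genuinely localizes, i.e. that the $\alpha$ triangles consuming $\bA{A}^{vc}$-arcs cannot "cheat" by stealing arcs that Lemma~\ref{metaclique}'s packings $\Sv,\Sc$ or the maximal gadget pairs were counted as using. Concretely one must argue that the four tight equalities cannot all hold unless $\Sv$, $\Sc$, a maximal pair in each $V_i$, the dummy triangle, and $\alpha$ further triangles are present and arc-disjoint \emph{simultaneously} — a counting argument on total arcs, combined with the structural fact (already noted before Lemma~\ref{metaclique}) that no triangle has two vertices in one gadget and one in another distinct gadget of the same side. I would handle this by summing the arc-consumption lower bounds forced by each tight equality and checking they add up to a consistent total, then using Observation~\ref{localoutdegree} to pin down, per variable gadget, which spare arcs remain for the clause-closing triangles — that residual-degree computation is exactly where the true/false dichotomy is read off.
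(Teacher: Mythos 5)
Your skeleton (tightness in all four inequalities of Lemma~\ref{packingsize}, reading the assignment off the gadget packings, a residual out-degree analysis) is the same as the paper's, but both directions have a genuine gap, and both gaps trace back to the same missing idea: the role of the dummy gadget $C_{m+1}$ and of the $3n$ dummy arcs. In the forward direction you build only \emph{one} triangle per clause (through the backward arc of a chosen satisfied literal) and then claim this "accounts for $\alpha$ more triangles". But $\alpha$ counts \emph{every} backward arc from $\Tc$ to $\Tv$: the two or three arcs of each clause \emph{and} the $3n$ dummy arcs $c^u_{m+1}\bar{x}_i$. The paper closes the arcs of the two non-selected literals using the internal forward arcs of the clause gadget, via the triangles $(a_1,c^1_j,c^3_j)$ and $(a_2,c^2_j,c^3_j)$, and closes each dummy arc with an out-arc of $\bar{x}_i$ inside $V_i$, via $(\bar{x}_i,t_i,c^1_{m+1})$, $(\bar{x}_i,x^1_i,c^2_{m+1})$, $(\bar{x}_i,x^2_i,c^3_{m+1})$. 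Your construction omits these $\alpha-m$ triangles, so it falls well short of the target size; the remark that the dummy arcs are "available to realize the $\bar{x}_i$ case" misreads what they are for.

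In the converse your key dichotomy is false as stated, in two places. First, the third vertex of the triangle containing $c^3_jz_u$ need \emph{not} lie in a variable gadget: it can be $c^1_j$ or $c^2_j$ (a triangle of type $\SoVdC$), and such a triangle certifies nothing about $V_i$. What saves the argument is a pigeonhole: $c^3_j$ has only two in-arcs inside $C_j$, so once one knows $\SoVVC=\SoVCC=\emptyset$ (the paper derives this from tightness of $|\SoVVV|+|\SoCCC|$ via Lemma~\ref{metaclique}, the step you only defer to your "main obstacle" paragraph), at least one backward arc of each clause must be closed with its second vertex inside $V_i$, i.e.\ lie in $\SodVC$, and only to that triangle does the residual-degree argument apply. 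Second, and more importantly, for $z_u=\bar{x}_i$ positivity of the residual out-degree does \emph{not} imply $\So[V_i]=\S^{\bot}_i$: by Observation~\ref{localoutdegree}(i)--(ii), $\bar{x}_i$ keeps residual out-degree $3$ under $\S^{\top}_i$ and $\S^{\top'}_i$ as well, so your criterion would certify a clause through the negative literal $\bar{v}_i$ even when $v_i$ is set to true. The paper's argument needs exactly the dummy mechanism you never invoke: tightness $|\SotC|=1$ forces the dummy triangle into $\So$, hence each of the three dummy arcs $c^u_{m+1}\bar{x}_i$ must be closed with an out-arc of $\bar{x}_i$ inside $V_i$; a further clause triangle through $\bar{x}_i$ then requires a fourth such arc, i.e.\ $d^+_{V_i\setminus\So[V_i]}(\bar{x}_i)\ge 4$, which by Observation~\ref{localoutdegree}(iii) happens only for $\S^{\bot}_i$. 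Without this step the true/false dichotomy cannot be read off residual degrees, so the converse does not go through as planned.
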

\begin{proof}
First, let suppose that there exists an assignation $a$ of the variables which satisfies $F$, and let $a^{\top}$ (resp. $a^{\bot}$) be the set of variables set to true (resp. false). 

We construct a triangle packing $\So$ of $\T$ with the desired number of triangles. First, we pick all the disjoint triangles of $\Sv$ and $\Sc$.
By Lemma~\ref{packingsize}, if we also add the dummy triangle $(c^1_{m+1},c^2_{m+1},c^3_{m+1})$ we have $6n(n-1)+3m(m+1)/2 + 1$ triangles in $\So$ until now. 

Then, for any variable $v_i$ of the formula $F$, if $v_i \in a^{\top}$, then we add in $\So$ the triangles $\S^{\top}_i$. Otherwise, we add $\S^{\bot}_i$. One can check that in both cases, these triangles are disjoint to the triangles we just added. Thus, in each $V_i$, we made an locally-optimal solution, so we added $2n$ triangles in $\So$. 

Now we add in $\So$ the triangles $(\bar{x}_i, t_i, c^1_{m+1})$, $(\bar{x}_i, x^1_i, c^2_{m+1})$ and $(\bar{x}_i, x^2_i, c^3_{m+1})$ which will consumes all the dummy arcs of the tournament. 
Recall that in~\autoref{localoutdegree} we mentioned that the vertices $x^1_i$ and $x^2_i$ (resp. $\bar{x}_i$) have an $\S^{\top}_i$-local out-degree both equal to 1 (resp. $\S^{\bot}_i$-local out-degree equals to 4).
Then given a clause $c_j$, let $l$ be one literal which satisfies $c_j$. Assume that the clause is of size 3, since the reasoning is the same for clauses of size 2. If $l$ is a positive literal, say $v_i$, then by denoting $u$ in $\{1,2\}$ the number such that $c^3_jx^u_i$ is a backward arc of $\T$. By the previous recall, we know that there exists $v\in V_i$ such that the arc $x^u_iv$ is available to make the triangle $(x^u_i, v, c^3_j)$. Otherwise, that is if $l$ is a negative literal, say $\bar{v}_i$, then we have $d^+_{V_i\setminus\S^{\bot}_i}(\bar{x}_i) =4$. Three of these four available arcs are used in the triangles which consume the dummy arcs, then we can still make the triangle $(\bar{x}_i,s_i,c^3_j)$. 
Let also $l_{1}$ and $l_{2}$ be the two other literals of $c_j$ (which do not necessarily satisfy $c_j$). Denote by $a_1$ and $a_2$  the vertices of $\Tv$ connected to $c^3_j$ corresponding to the literals $l_{1}$ and $l_{2}$, respectively. Then we add the two following triangles: $(a_1, c^1_j, c^3_j)$ and $(a_2, c^2_j, c^3_j)$.
So we used all the backward arc from $\Tc$ to $\Tv$, and there are no triangles which use two arcs of $\bA{A}_{\Vcv}(\T)$.

Then in the packing $\So$ there are in total $6n(n-1) + 3m(m+1)/2 +  2n + \alpha + 1$ triangles.

Conversely let $\So$ be a triangle packing of $\T$ with $|\So| = 6n(n-1)  +  3m(m+1)/2  +  2n + \alpha + 1$. In the same way as we already did before, we partition $\So$ into the different subsets we defined before. We have $|\So| = |\SoVVV|+|\SoVVC|+|\SoVCC|+|\SoCCC|$ $+|\SodVC|+|\SoVdC|+|\SotV|+|\SotC|$. By Lemma~\ref{packingsize} all the upper bounds described above are tight, that is:
\begin{itemize}
\item $|\SVVV|+|\SCCC| =  6n(n-1)  +  3m(m+1)/2$,
\item $|\SdVC| + |\SVdC| + |\SVCC| +|\SVVC| = \alpha$,
\item $|\StV| = 2n$,
\item $|\StC| = 1$.
\end{itemize}

Let us first prove that $|\SoVVC|+|\SoVCC| = 0$. Let $x = |\SoVVC|+|\SoVCC|$. Since each triangle of the sets $\SoVVC, \SoVCC, \SodVC$ and $\SoVdC$ use exactly one backward arc of $\bA{A}_{\Vcv}(\T)$, it implies that $|\SodVC|+|\SoVdC| \leq \alpha - x$. Moreover, if $x \neq 0$, then we have $|\SoVVV| < |\Sv|$ or $|\SoCCC| < |\Sc|$ because each triangle in $\SoVVC$ (resp. $\SoVCC$) will use one arc between two distinct variable gadgets (resp. clause gadgets) and according to Lemma~\ref{metaclique}, $\Sv$ (resp. $\Sc$) uses all the arcs between distinct variable gadgets (resp. clause gadgets). Finally, we always have $|\SotV| \leq 2n$ and $|\SotC| \leq 1$ by construction. Therefore, if $x \neq 0$, we have $|\So| < |\Sv| + |\Sc| + x + (\alpha - x) + 2n + 1$ that is $|\So| <  6n(n-1)  +  3m(m+1)/2  +  2n + \alpha + 1$, which is impossible. So we must have $x=0$, which implies $\SoVVC = \SoVCC = \emptyset$.\\
Since $|\SotV| = 2n$ and we have at most two arc-disjoint triangles in each variable gadget $V_i$, it implies that $\So[V_i] \in \{\S^{\bot}_i, \S^{\top}_i, \S^{\top'}_i\}$. In the following, we will simply write $\S_i$ instead of $\So[V_i]$. Let consider the following assignation $a$: 
for any variable $v_i$, if $\S_i =\S^{\bot}_i$, then $a(v_i) = false$ and $a(v_i) = true$ otherwise. Let us see that the assignation $a$ satisfies the formula $F$. 
We have just proved that the backward arcs from $\Tc$ to $\Tv$ are all used in $\SdVC$ and $\SVdC$. 
As $|\SotC|=1$ the dummy triangle $C_{m+1}$ belongs to $\So$. So every dummy arc $c_{m+1}^u\bar{x}_i$ is contained in a triangle of $\So$ which uses an arc of $V_i$. Therefore in each $V_i$  we have $d^+_{V_i\setminus\S_i}(\bar{x}_i)\ge 3$.
Moreover, for each clause of size $q$ with $q\in{2,3}$, there are $q$ triangles which use the backward arcs coming from the clause to variable gadgets. Let take $C_j$ one clause gadget of size 3 (we can do the same reasoning if $C_j$ has size 2). By construction the 3 triangles cannot all lie in $\SVdC$. Thus, there is at least one of these triangles which is in $\SdVC$. Let $t$ be one of them, $V_i$ be the variable gadget where $t$ has two out of its three vertices and $\tilde{x}$ be the vertex of $V_i$ which is also the head of the backward arc from $C_j$ to $V_i$. By construction, $\tilde{x}$ corresponds to a literal $l$ in the clause $c_j$. If $l$ is positive, then $\tilde{x} = x^1_i$ or $\tilde{x} = x^2_i$. In both cases, since $t$ has a second vertex in $V_i$, we have $d^+_{V_i\setminus\S_i}(\tilde{x}) > 0$. Thus, using~\autoref{localoutdegree} we cannot have $\S_i =\S^{\bot}_i$ so the assignment sets the positive literal $l$ to $true$, which satisfies $c_j$. Otherwise, $l$ is negative so $\tilde{x}=\bar{x}_i$. Since $\bar{x}_i$ has to use three out-going arcs to consume the dummy arcs and one out-going arc to consume $t$, we have $d^+_{V_i\setminus\S_i}(\bar{x}_i) \geq 4$ and so $\S_i = \S^{\bot}_i$ by~\autoref{localoutdegree}. Therefore, $c_j$ is satisfied in that case too. 

Thus, the assignation $a$ satisfies the whole formula $F$.

\end{proof}

As \tsatt~is \NP-Hard~\cite{papadimitriou2003computational}, this directly implies the following theorem.

\begin{theorem}
\label{triangleNP}
The problem \maxtt~is \NP-Hard.
\end{theorem}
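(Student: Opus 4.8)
The plan is to observe that Theorem~\ref{thm:reduc} already does essentially all the work, and that what remains is a routine packaging argument turning the equivalence into a polynomial-time many-one reduction from \tsatt~to the decision version \kmaxtt. Concretely, I would first check that the map $f$ sending an instance $F$ of \tsatt~to the tournament $\T$ is computable in polynomial time: this was noted at the end of the construction (the preprocessing that enforces $n\equiv 1,3\pmod 6$ and $m+1\equiv 1,3\pmod 6$ adds only $O(1)$ new variables/clauses per correction step and preserves satisfiability since the padding clauses use fresh variables; the Steiner-triple packings of $K_n$ and $K_{m+1}$ are computable in polynomial time; and $\T$ has $6n+3(m+1)$ vertices). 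Then I would set the target value $k := 6n(n-1) + 3m(m+1)/2 + 2n + \alpha + 1$, where $\alpha = |\bA{A}^{vc}(\T)|$; note that $\alpha$ depends only on $F$ (it is the number of literal-occurrence arcs plus the $3n$ dummy arcs, so $\alpha \le 3m + 3n$) and hence $k$ is computable in polynomial time from $F$.

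Next I would establish the two directions of correctness of the reduction $F\mapsto (\T,k)$ for \kmaxtt. If $F$ is satisfiable, Theorem~\ref{thm:reduc} yields a triangle packing of $\T$ of size exactly $k$, so in particular of size at least $k$. Conversely, if $\T$ admits a triangle packing of size at least $k$, then by Lemma~\ref{packingsize} every triangle packing of $\T$ has size at most $6n(n-1) + 3m(m+1)/2 + 2n + \alpha + 1 = k$, so this packing has size exactly $k$, and Theorem~\ref{thm:reduc} gives that $F$ is satisfiable. This shows $F$ is a yes-instance of \tsatt~if and only if $(\T,k)$ is a yes-instance of \kmaxtt. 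Since \tsatt~is \NP-hard~\cite{papadimitriou2003computational} and membership of \kmaxtt~in \NP~is clear (a triangle packing of size $k$ is a polynomial-size certificate checkable in polynomial time), \kmaxtt~is \NP-complete, hence the optimization problem \maxtt~is \NP-hard.

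There is no real obstacle here; the only point requiring a small amount of care is the passage between "packing of size exactly $k$" (as phrased in Theorem~\ref{thm:reduc}) and "packing of size at least $k$" (as needed for the decision problem), which is exactly bridged by the upper bound of Lemma~\ref{packingsize}. I would also briefly remark that the same $(\T,k)$ is, by construction, an instance witnessing the "tight" variants mentioned in the introduction — since $\bA{A}_\V(\T)$ is a FAS of $\T$ whose size can be compared with $k$ — but for the present statement only the equivalence above is needed.
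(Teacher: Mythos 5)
Your proposal is correct and follows essentially the same route as the paper, which also obtains Theorem~\ref{triangleNP} directly from the \NP-hardness of \tsatt~together with the polynomial-time reduction $f$ and the equivalence of Theorem~\ref{thm:reduc}; your extra remarks (using Lemma~\ref{packingsize} to pass from ``size at least $k$'' to ``size exactly $k$'', and the polynomial computability of $\T$ and of the threshold $k$) are exactly the routine packaging the paper leaves implicit.
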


As every cycle in a tournament contains a triangle, 
it is well known that, for vertex disjoint packing, any cycle packing of size $k$ implies a triangle packing of size $k$. This implies that cycle and triangle packing are equivalent in the vertex-disjoint case. However, for the arc-disjoint case, this implication is no longer true in the general case.
Thus, we need to establish the following lemma to transfer the previous \NP-Hardness result to \maxct.

\begin{lemma}
\label{triangleequivcycle}
Given a \tsatt~ instance $F$, and $\T$ the tournament constructed from $F$ with the reduction $f$, we have a triangle packing $\S$ of $\T$ of size $  6n(n-1)  +  3m(m+1)/2  +  2n + \alpha + 1$ if and only if there is a cycle packing $\SC$ of size $  6n(n-1) +  3m(m+1)/2  +  2n + \alpha + 1$.
\end{lemma}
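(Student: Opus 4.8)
A triangle packing is in particular a cycle packing, so a triangle packing of size $6n(n-1)+3m(m+1)/2+2n+\alpha+1$ is already a cycle packing of that size; this is the easy direction. For the converse I would not try to replace cycles by triangles directly --- shortening a cycle of length $\ge 4$ to a triangle needs a chord, which may already be used by another cycle of the packing --- but instead prove that a cycle packing of $\T$ of size $N:=6n(n-1)+3m(m+1)/2+2n+\alpha+1$ forces $F$ to be satisfiable, and then invoke Theorem~\ref{thm:reduc} to obtain a triangle packing of size $N$.

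The starting point is the identity $|\bA{A}_\V(\T)|=N$: by construction $\bA{A}_\V(\T)$ is the disjoint union of the $6n(n-1)$ arcs between distinct variable gadgets, the $2n$ backward arcs internal to the $V_i$'s, the $3m(m+1)/2$ arcs between distinct clause gadgets, the single arc inside $C_{m+1}$, and the $\alpha$ arcs of $\bA{A}^{vc}(\T)$. Let $\SC$ be a cycle packing with $|\SC|=N$. Since $\bA{A}_\V(\T)$ is a feedback arc set, every cycle of $\SC$ uses at least one backward arc; as the cycles are arc-disjoint and $|\SC|=N=|\bA{A}_\V(\T)|$, every cycle uses \emph{exactly} one backward arc and every backward arc of $\T$ is used by exactly one cycle of $\SC$. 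Since the forward part of such a cycle is monotone for $\V$, this forces strong confinement: a cycle whose unique backward arc is internal to a gadget stays in that gadget; a cycle whose unique backward arc runs between two variable (resp. clause) gadgets stays in $\Tv$ (resp. $\Tc$); the only cycle whose backward arc is $c^3_{m+1}c^1_{m+1}$ is the dummy triangle; and the analogues of the inequalities of Lemma~\ref{packingsize} hold for $\SC$ and are all tight.

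With this in place I would replay the ``conversely'' direction of the proof of Theorem~\ref{thm:reduc}. As there, tightness together with Lemma~\ref{metaclique} forces that no cycle uses an arc between two distinct variable or clause gadgets except the $6n(n-1)+3m(m+1)/2$ cycles using such an arc as their backward arc. For each $i$, the two cycles of $\SC$ confined to $V_i$ use the two internal backward arcs $s_ir_i$ and $t_ix^1_i$; since a cycle using $s_ir_i$ is one of $\t^1_i,\t^2_i$ or the $4$-cycle $(r_i,\bar{x}_i,x^1_i,s_i)$, and a cycle using $t_ix^1_i$ is one of $\t^3_i,\t^4_i$ or the $4$-cycle $(x^1_i,s_i,x^2_i,t_i)$, a short check of the arc-disjoint pairs shows that the out-degrees retained by $x^1_i,x^2_i,\bar{x}_i$ after removing these two cycles always match one of the three patterns of \autoref{localoutdegree}. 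I set $a(v_i)=false$ exactly when this is the pattern of $\S^{\bot}_i$ (equivalently, $\bar{x}_i$ keeps out-degree $4$) and $a(v_i)=true$ otherwise. Then, exactly as in Theorem~\ref{thm:reduc}: the dummy triangle, which lies in $\SC$, forces three out-arcs of each $\bar{x}_i$ into the cycles carrying the three dummy arcs; and for each clause $C_j$ at least one of its cross-arcs is carried by a cycle that also uses an arc inside some $V_i$, whose head $\tilde x$ corresponds to a literal of $c_j$, so the local out-degree at $\tilde x$ forces that literal to be $true$ under $a$. Hence $a$ satisfies $F$.

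The genuinely new point, and the main obstacle, is that the cycles of $\SC$ need not be triangles, so the rigid local picture above must be justified. Two things have to be controlled. First, the $4$-cycles that can genuinely occur inside a variable gadget are harmless: the case check above shows they leave the retained out-degrees at $x^1_i,x^2_i,\bar{x}_i$ equal to those of one of $\S^{\top}_i,\S^{\top'}_i,\S^{\bot}_i$. Second, a long cycle carrying a between-gadget arc or an arc of $\bA{A}^{vc}(\T)$ could in principle ``borrow'' an internal arc of some $V_i$ and corrupt the out-degree bookkeeping there; I would bound this by noting that the arcs of $\Tv$ are used only by the $6n(n-1)$ between-variable-gadget cycles, the $2n$ cycles confined to variable gadgets, and the parts of the $\bA{A}^{vc}(\T)$-cycles that enter $\Tv$, so by $|A(\Tv)|=\binom{6n}{2}$ and the tight counts of the previous paragraph the total number of ``borrowed'' internal arcs is linear in $n$, after which a gadget-by-gadget accounting (of which internal arcs of $V_i$ are spent outside its two confined cycles) shows no borrowing changes the pattern class of a $V_i$; the symmetric count inside $\Tc$ handles the clause side. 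Once this is done, $F$ is satisfiable and Theorem~\ref{thm:reduc} produces a triangle packing of size $N$, completing the proof.
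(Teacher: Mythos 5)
Your overall route is genuinely different from the paper's: the paper never goes back through satisfiability, but instead restructures the given cycle packing class by class into a triangle packing of the same size (replacing the cycles between distinct variable/clause gadgets by the packings $\Sv,\Sc$ of Lemma~\ref{metaclique}, and locally rebuilding the cross cycles and the cycles inside each $V_i$ into triangles). Your plan --- show that a cycle packing of size $N$ forces $F$ satisfiable and then invoke Theorem~\ref{thm:reduc} --- is logically sound, and your opening observation that $|\bA{A}_\V(\T)|=N$, so that each cycle carries exactly one backward arc and each backward arc lies in exactly one cycle, is correct and a nice starting point. However, as written the argument has a genuine gap at its load-bearing step. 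The claim that ``no cycle uses an arc between two distinct variable or clause gadgets except the $6n(n-1)+3m(m+1)/2$ cycles using such an arc as their backward arc'' does \emph{not} follow from your tightness, which only accounts for backward arcs: a cycle whose backward arc lies in $\bA{A}^{vc}(\T)$ could a priori leave $\bar{x}_i$ (or $x^u_i$) directly into a later variable gadget or into $\Tc$ via forward arcs that your counting never touches, and then the assertions that the three dummy-arc cycles consume three out-arcs of $\bar{x}_i$ \emph{inside} $V_i$, and that each clause has a witness cycle spending an arc inside some $V_i$, are unsupported. The missing ingredient is the paper's second counting: a cycle contained in $\Tv$ meeting two distinct gadgets must meet at least three (all arcs between a fixed pair of gadgets have the same direction), hence uses at least $3$ of the $18n(n-1)$ between-variable-gadget arcs; since exactly $6n(n-1)$ cycles are confined to $\Tv$ and meet several gadgets, they use exactly $3$ each and exhaust all such arcs (similarly on the clause side), which is what actually pins every cross cycle to the form ``path inside one $V_i$, one jump to $\Tc$, path inside one $C_j$''. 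Your substitute in the last paragraph (bounding the number of ``borrowed'' internal arcs by something linear in $n$ and appealing to an unspecified gadget-by-gadget accounting) is not a proof and is aimed at the wrong quantity: borrowing of internal arcs of $V_i$ by cross cycles is both allowed and necessary; what must be excluded is the use of between-gadget arcs.

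A second, smaller but real, error: your ``short check'' is false. The $4$-cycle $(r_i,\bar{x}_i,x^1_i,s_i)$ together with $\t^4_i=(x^1_i,x^2_i,t_i)$ is an arc-disjoint pair covering both internal backward arcs of $V_i$, and it leaves retained out-degrees $(0,0,3)$ at $(x^1_i,x^2_i,\bar{x}_i)$, which is none of the three patterns of \autoref{localoutdegree}. The extraction can be repaired --- what the forcing actually needs is only that $\S^{\bot}_i$ is the unique local configuration leaving out-degree $4$ at $\bar{x}_i$, and that any configuration leaving positive out-degree at $x^1_i$ or $x^2_i$ is not $\S^{\bot}_i$ --- but the claim as you state it is wrong, and since both the dummy-arc bookkeeping and the clause-witness step are argued through it, the case analysis must be redone with the fourth pattern included. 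With the confinement argument supplied as above and this local analysis corrected, your satisfiability-based proof would go through; without them, the two central forcing steps are unjustified.
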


\begin{proof}
Given a cycle packing $\SC$ of $\T$ of size $ 6n(n-1)  +  3m(m+1)/2  +  2n + \alpha + 1$, we partition it into the following sets:
\begin{itemize}
\item $\CV = \{ (v_1, \dots, v_p)\in\SC : \exists i \in [n], \forall k \in [p], v_k \in V_i\}$,
\item $\CC = \{ (v_1, \dots, v_p)\in\SC : \exists j \in [m+1], \forall k \in [p], v_k \in C_j\}$,
\item $\CVV = \{ (v_1, \dots, v_p)\in\SC : \forall k\in [p], \exists i\in[n], v_k \in V_i$ and $(v_1, \dots, v_p) \notin \CV\}$,
\item $\CCC = \{ (v_1, \dots, v_p)\in\SC : \forall k\in [p], \exists j\in[m+1], v_k \in C_j$ and $(v_1, \dots, v_p) \notin \CC\}$,
\item $\CVC = \{(v_1, \dots, v_p)\in\SC :  \exists i \in [n], \exists j \in [m+1], \exists k_1, k_2 \in [p] , v_{k_1} \in V_i, v_{k_2} \in C_j\}$.
\end{itemize}
As we did in the previous proof, we begin by finding upper bounds on each of these sets. First, recall that the FAS of each $V_i$ is 2. Thus, we have $|\CV| \leq 2n$. By construction, we also have $|\CC| \leq 1$.
Secondly, notice that a cycle of $\CVV$ cannot belong to exactly two distinct variable gadgets since the arcs between them are all in the same direction. Thus, the cycles of $\CVV$ have at least three vertices which implies $|\CVV| \leq  6n(n-1) $. We obtain $|\CCC| \leq  3m(m+1)/2 $ using the same reasoning on $\CCC$.
Finally, we have $|\CVC| \leq \alpha$ since each cycle have at least one backward arc.\\
Putting these upper bound together, we obtain that $|\SC| \leq   6n(n-1) +  3m(m+1)/2  +  2n + \alpha + 1$ which implies that the bounds are tight. In particular, cycles of $\CVV$ (resp. $\CCC$) use exactly three arcs that are between distinct variable gadgets (resp. clause gadgets) and all these arcs are used. So we can construct a new cycle packing $\SC'$ where we replace the cycles of $\CVV$ and $\CCC$ by the triangle packings $\Sv$ and $\Sc$ defined in Lemma~\ref{metaclique}. The new solution uses a subset of arcs of $\SC$ and has the same size.\\
The cycles of $\CVC$ use exactly one backward arc of $\bA{A}^{vc}(\T)$ due to the tight upper bound $\alpha$. Moreover, by the previous reasoning, two vertices of a cycle of $\CVC$ cannot belong to two different variable gadgets (resp. clause gadgets). Let $C_j$ be a clause gadget which has three literals (if it has only two literals, the reasoning is analogous). Let $\tilde{x}_{i_k}\in V_{i_k}$ be the head of a backward arc from $c^3_j$ where $k\in[3]$. 
By the previous arguments each arc $c_j^3\tilde{x}_{i_k}$ is contained in a cycle $o_k$ of $\SC$ for $k\in[3]$.
There is at least one $\tilde{x}_{i_k}$ whose next vertex in $o_k$, say $y$, belongs to $V_{i_k}$ since $C_j$ has only two other vertices in addition to $c^3_j$. Without loss of generality, we may assume that $\tilde{x}_{i_3}$ is that vertex. Then, we can replace $o_1$ and $o_2$ by the triangles $(\tilde{x}_{i_1}, c^1_j, c^3_j)$ and $(\tilde{x}_{i_2}, c^2_j, c^3_j)$. The arcs $c^1_j c^3_j$ and $c^2_j c^3_j$ cannot have already been used because $C_j$ is acyclic and we previously consumed all the arcs between clause gadgets. In the same way, we replace the cycle $o_3$ by the triangle $(\tilde{x}_{i_3}, y, c^3_j)$. The arc $yc^3_j$ is available since it could have been used only in the cycle $o_3$.\\ 
We now prove that given a $V_i$, we can restructure every cycle of $\CV[V_i]$ into triangles. Remind that $\CV[V_i]$ have exactly 2 cycles, and notice that by construction one cannot have two cycles each having a size greater than 3. 
First, if the two cycles are triangles, we are done. Then $\CV[V_i]$ contains a triangle, say $\t$, and a cycle, say $o$, of size greater than 3. If $o$ contains the backward arc $s_i r_i$, then by construction $o=(r_i, \bar{x}_i, x^1_i, s_i)$. In that case, we necessary have $\t=(x^1_i, x^2_i, t_i)$ and we can restructure $o$ in the triangle $(r_i, x^1_i, s_i)$. The arc $r_i x^1_i$ is not contained in $\SC$ since the only arcs inside $V_i$ we may have imposed until now are out-going arcs of $x^1_i, x^2_i$ and $\bar{x}_i$. If $o$ contains the backward arc $t_i x^1_i$, then by construction $o=(x^1_i, s_i, x^2_i, t_i)$ and $t=(r_i, \bar{x}_i, s_i)$. In the same way, we can restructure $o$ into $(x^1_i, s_i, t_i)$ whose all the arcs are available.\\
As $\CC$ is already a triangle $\T$ finally has a triangle packing of size $ 6n(n-1)  +  3m(m+1)/2  +  2n + \alpha + 1$. The other direction of the equivalence is straightforward.
 
\end{proof}

The previous lemma and \autoref{triangleNP} directly imply the following theorem:

\begin{theorem}
\label{cycleNP}
The problem \maxct~is \NP-Hard.
\end{theorem}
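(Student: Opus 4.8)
The final statement is Theorem~\ref{cycleNP}, asserting that \maxct~is \NP-Hard, and it is an immediate consequence of two results already established in the excerpt. The plan is therefore very short: chain together Theorem~\ref{triangleNP} (\NP-hardness of \maxtt) and Lemma~\ref{triangleequivcycle} (the packing-size equivalence between triangles and cycles for tournaments produced by the reduction $f$).

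Concretely, I would argue as follows. Recall that $f$ is a polynomial-time reduction that maps a \tsatt~instance $F$ with $n$ variables and $m$ clauses to a tournament $\T = f(F)$. Combining Theorem~\ref{thm:reduc} and Lemma~\ref{triangleequivcycle}, $F$ is satisfiable if and only if $\T$ admits a cycle packing of size exactly $N := 6n(n-1) + 3m(m+1)/2 + 2n + \alpha + 1$, where $\alpha = |\bA{A}^{vc}(\T)|$; note that $N$ is computable in polynomial time from $F$. Hence $f$, together with the threshold $N$, is a polynomial-time many-one reduction from \tsatt~to the decision version of \maxct~(``does $\T$ have a cycle packing of size at least $N$?''): one direction is the ``only if'' above, and for the converse one uses that the maximum cycle packing of $\T$ has size at most $N$ (shown inside the proof of Lemma~\ref{triangleequivcycle} via the upper bounds on $|\CV|$, $|\CC|$, $|\CVV|$, $|\CCC|$, $|\CVC|$), so a packing of size at least $N$ is in fact of size exactly $N$, which forces $F$ satisfiable. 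Since \tsatt~is \NP-Hard~\cite{papadimitriou2003computational}, so is \maxct.

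There is essentially no obstacle here: all the technical work (the construction, the size bounds, and the triangle/cycle interchangeability) has already been done in the preceding lemmas and theorems, so the proof is just the observation that these pieces compose. The only point requiring a word of care is that Lemma~\ref{triangleequivcycle} is stated as an equivalence about exact sizes rather than ``at least'' sizes; this is harmless because the proof of that lemma simultaneously establishes that $N$ is an upper bound on any cycle packing of $\T$, so the ``at least $N$'' and ``exactly $N$'' formulations coincide on instances in the image of $f$.

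\begin{proof}
By Theorem~\ref{thm:reduc} and Lemma~\ref{triangleequivcycle}, a \tsatt~instance $F$ (with $n$ variables, $m$ clauses) is satisfiable if and only if the tournament $\T = f(F)$ admits a cycle packing of size $6n(n-1) + 3m(m+1)/2 + 2n + \alpha + 1$, where $\alpha = |\bA{A}^{vc}(\T)|$. Moreover, the proof of Lemma~\ref{triangleequivcycle} shows that this value is an upper bound on the size of any cycle packing of $\T$, so $\T$ has a cycle packing of size \emph{at least} $6n(n-1) + 3m(m+1)/2 + 2n + \alpha + 1$ if and only if it has one of exactly this size. Since $f$ is computable in polynomial time and this threshold is computable in polynomial time from $F$, this gives a polynomial-time reduction from \tsatt~to \maxct. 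As \tsatt~is \NP-Hard~\cite{papadimitriou2003computational}, so is \maxct.
\end{proof}
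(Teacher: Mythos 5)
Your proof is correct and follows essentially the same route as the paper, which derives Theorem~\ref{cycleNP} directly from Theorem~\ref{triangleNP} (via Theorem~\ref{thm:reduc}) and Lemma~\ref{triangleequivcycle}; your extra remark that the threshold is also an upper bound on any cycle packing (so ``at least'' and ``exactly'' coincide) is a sound clarification of a point the paper leaves implicit.
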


Let us now define two special cases \maxtttight~(resp. \maxcttight) where, given a tournament $\T$ and a linear ordering $\sigma$ with $k$ backward arcs (where $k$ is the size of an optimal feedback arc set of $\T$), the goal is to decide if there is an arc-disjoint triangle (resp. cycle) packing of size $k$. We call these special cases the "tight" versions of the classical packing problems because as the input admits a FAS of size $k$, any triangle (or cycle) packing has size at most $k$.

We now prove we can construct in polynomial time an ordering of $\T$, the 
tournament of the reduction, with $k$ backward arcs (where $k$ is the threshold value defined in Lemma~\ref{thm:reduc})
 
\begin{lemma}\label{lem:tournamenttight}
Let $\T$ be a tournament constructed by the reduction $f$ of Section~\ref{nphardness}, and $k$ the threshold value defined in Lemma~\ref{thm:reduc}.
We can construct (in polynomial time) an ordering of $\T$ with $k$ backward arcs (implying that $\T$ admits a FAS of size $k$).
\end{lemma}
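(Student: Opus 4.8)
The plan is to observe that the ordering required by the lemma is simply the ordering $\V(\T)=\Vv(\Tv)\Vc(\Tc)$ already produced by the reduction $f$, so that the whole content of the lemma reduces to checking that this ordering has exactly $k$ backward arcs. This is enough: the ordering is output by $f$ in polynomial time (the only non-immediate ingredient being the perfect triangle packings $\S_{K_n}$ and $\S_{K_{m+1}}$ of $K_n$ and $K_{m+1}$, computable in polynomial time by~\cite{SteinerTriples}), and since the backward arc set of any linear representation is a feedback arc set, the parenthetical claim follows at once.

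So the core of the argument is a counting of $\bA{A}_\V(\T)=\bA{A}_{\Vv}(\Tv)\cup\bA{A}_{\Vc}(\Tc)\cup\bA{A}^{vc}(\T)$, which is a disjoint union since the three sets consist respectively of arcs with both endpoints in $\Tv$, both endpoints in $\Tc$, and one endpoint in each. For $\bA{A}_{\Vv}(\Tv)$, each of the $n(n-1)/6$ backward arcs of $T_{K_n}$ (one per triangle of $\S_{K_n}$) is ``blown up'' into the $6\times 6=36$ arcs between the two corresponding variable gadgets, all backward in $\V(\T)$, for a total of $6n(n-1)$ backward arcs; and the intra-gadget part contributes $\sum_{i=1}^{n}|\bA{A}_{\Vv}(V_i)|=2n$. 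Symmetrically, for $\bA{A}_{\Vc}(\Tc)$ the $m(m+1)/6$ backward arcs of $T_{K_{m+1}}$ blow up into $3\times 3=9$ arcs each, giving $3m(m+1)/2$ backward arcs, while the only clause gadget with a non-empty backward set is $C_{m+1}$, contributing the single dummy arc $c^3_{m+1}c^1_{m+1}$. Finally $|\bA{A}^{vc}(\T)|=\alpha$ by the definition of $\alpha$. Adding these up, $\V(\T)$ has $6n(n-1)+2n+3m(m+1)/2+1+\alpha=k$ backward arcs, as desired.

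I do not expect any real obstacle here: the only point needing a little care is verifying that the set displayed above is indeed the complete set of backward arcs of $\V(\T)$, i.e.\ that every listed arc genuinely points backward — the arcs of $\bA{A}^{vc}(\T)$ go from $\Tc$ to $\Tv$ and $\Tv$ entirely precedes $\Tc$; inside each blown-up block the kept arcs go from the latest of the three gadgets back to the earliest; and inside $V_i$ the arcs $s_ir_i$ and $t_ix^1_i$ point backward in the order $(r_i,\bar{x}_i,x^1_i,s_i,x^2_i,t_i)$, as does $c^3_{m+1}c^1_{m+1}$ in $(c^1_{m+1},c^2_{m+1},c^3_{m+1})$. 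In other words, the reduction was designed precisely so that the natural ordering of $\T$ attains the threshold value as its number of backward arcs, and that is all the lemma records.
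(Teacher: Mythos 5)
Your proposal is correct and follows essentially the same route as the paper: both take the natural ordering $\V(\T)=\Vv(\Tv)\Vc(\Tc)$ produced by the reduction (whose backward arcs come from one backward arc per triangle of the Steiner packings, blown up into $36$ resp.\ $9$ backward arcs per pair of gadgets) and count $6n(n-1)+2n+3m(m+1)/2+1+\alpha=k$, with polynomial-time constructibility of the triangle packings and the fact that the backward set of a linear representation is a FAS. No discrepancy worth noting.
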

\begin{proof}
Let us define a linear representation $(\sigma(\T),\bA{A}(\T))$ such that $|\bA{A}(\T)|=k$.

Remember that since $n \equiv 1,3 \pmod{6}$, the edges of  the $n$-clique $K_n$ can be packed into a packing $O$ of $n(n-1)/6$ (undirected) triangles.
Let us first prove that there exists an orientation $\T_{K_n}$ of $K_n$ and a linear ordering $\sigma$ of $\T_{K_n}$ with $|O|$ backward arcs.
Let $\sigma = 1\dots n$.
For each undirected triangle $ijk$ in $O$ where $i < j < k$, we set 
$ki \in \bA{A}(\T_{K_n})$ (implying that $ij$ and $jk$ are forward arcs). As all edges are used in $O$ this defines an orientation for all edges.
Thus, there is only $|O|$ backward arcs in $\sigma$.

Thus, when using the previous orientations $\T_{K_n}$ to construct the variable tournament $T_v$ of the reduction (remember that we blow up each vertex $u_i$ into $6$ vertices $V_i$), we get an ordering with
 $36n(n-1)/6= 6n(n-1)$ backward arcs between two different $V_i$ (more formally, $|\{a \in \bA{A}(\T_v) : \exists i_1 \neq i_2, h(a) \in V_{i_1}, t(a) \in V_{i_2} \}| = 6n(n-1)$).
 
 Following the same construction for the clause tournament $T_c$ we get an ordering with $3m(m+1)/2$ backward arcs between two distinct $C_j$.
 
 Now, as there are two backward arcs in each $V_i$, one backward arc in $C_{m+1}$, and $\alpha$ backward arcs from $T_c$ to $T_v$, the total number of backward arcs is $k$.

\end{proof}

We also prove that $k$ is the size of an optimal feedback arc set of $\T$ too.

\begin{lemma}\label{lem:tightlb}
Let $\T=(V,A)$ be a tournament constructed by the reduction $f$ of Section~\ref{nphardness} and $k$ the threshold value defined in Lemma~\ref{thm:reduc}.
Then, $|FAS^*(\T)| \ge k$, where $FAS^*(\T)$ is an optimal FAS of $\T$.
\end{lemma}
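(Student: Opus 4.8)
The plan is to lower-bound $|FAS^*(\T)|$ by a fractional cycle-packing argument. Observe first that if $(y_C)_C$ are non-negative weights on the directed cycles $C$ of $\T$ such that $\sum_{C\ni a}y_C\le 1$ for every arc $a$, then for any feedback arc set $A$ of $\T$ we get $\sum_C y_C\le\sum_C y_C\,|C\cap A|=\sum_{a\in A}\sum_{C\ni a}y_C\le|A|$; so it suffices to exhibit such a family of total weight $k$. One cannot hope to do this with \emph{integral} weights (that would be an arc-disjoint cycle packing of size $k$, forcing $F$ satisfiable by Theorem~\ref{thm:reduc} and Lemma~\ref{triangleequivcycle}): the weights must be genuinely fractional, and the construction below is designed so that giving weight $\tfrac12$ to each ``truth value'' fractionally satisfies every clause.

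The family would consist of five groups. (1) Every triangle of $\Sv$ and of $\Sc$ (Lemma~\ref{metaclique}), each with weight $1$; these saturate exactly the arcs between distinct variable gadgets and between distinct clause gadgets, and contribute $6n(n-1)+3m(m+1)/2$. (2) The dummy triangle $(c^1_{m+1},c^2_{m+1},c^3_{m+1})$ with weight $1$, saturating the internal arcs of $C_{m+1}$ and contributing $1$. (3) In each $V_i$, weight $\tfrac12$ on each triangle of $\S^{\top}_i$ and on each triangle of $\S^{\bot}_i$, contributing $2n$; inside $V_i$ this induces load $1$ on $s_ir_i,\ x^1_is_i,\ t_ix^1_i$, load $\tfrac12$ on $r_i\bar x_i,\ \bar x_is_i,\ s_it_i,\ r_ix^1_i,\ x^1_ix^2_i,\ x^2_it_i$, and load $0$ on the remaining arcs, in particular on $\bar x_ix^1_i,\ \bar x_ix^2_i,\ \bar x_it_i$. (4) For each dummy arc $c^u_{m+1}\bar x_i$ ($u\in[3]$), the triangle $(\bar x_i,w^u_i,c^u_{m+1})$ with weight $1$, where $(w^1_i,w^2_i,w^3_i)=(x^1_i,x^2_i,t_i)$, contributing $3n$. (5) For each literal $l$ of each clause $c_j$ (of size two or three), with $z$ the head of the backward arc from $c^3_j$ (resp.\ $c^2_j$) associated to $l$: weight $\tfrac12$ on a triangle $(z,w,c^3_j)$ that leaves $z$ by an internal arc of its variable gadget ($w\in\{x^2_i,t_i\}$ for a positive literal, $w=s_i$ for a negative one), and weight $\tfrac12$ on a triangle $(z,c^v_j,c^3_j)$ using an internal arc of the clause gadget; this contributes $\sum_j|c_j|$. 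As $\alpha=3n+\sum_j|c_j|$, the total weight is $6n(n-1)+3m(m+1)/2+2n+1+\alpha=k$.

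What remains --- and is the heart of the matter --- is to check that every arc has total load $\le 1$. Arcs between distinct same-type gadgets carry load $1$ from group~(1) and nothing else; the internal arcs of $C_{m+1}$ carry load $1$ from group~(2); each backward arc of $\bA{A}^{vc}(\T)$ carries total load exactly $1$ (all on its group-(4) or group-(5) triangles); and each forward arc from $\Tv$ to $\Tc$ lies in at most one of the triangles above. The only genuine interaction is inside the variable gadgets, and here one uses that each variable occurs at most twice positively (once through $x^1_i$, once through $x^2_i$) and exactly once negatively (through $\bar x_i$): group~(4) raises $\bar x_ix^1_i,\ \bar x_ix^2_i,\ \bar x_it_i$ from $0$ to $1$, the unique negative group-(5) triangle raises $\bar x_is_i$ from $\tfrac12$ to $1$, and each positive group-(5) triangle raises $x^1_ix^2_i$ (resp.\ $x^2_it_i$) from $\tfrac12$ to $1$ --- so nothing exceeds $1$. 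Lastly, at each clause gadget the internal arcs $c^v_jc^3_j$ offer total capacity $2$ (capacity $1$, via $c^1_jc^2_j$, for a size-two clause), while group~(5) requests only $\tfrac12$ per literal, i.e.\ $\tfrac32\le 2$ (resp.\ $1\le 1$); this slack is available for \emph{every} formula $F$ and is exactly what the weight-$\tfrac12$ choice buys. The main obstacle is this global bookkeeping: one must fix the vertices $w$ and $c^v_j$ consistently across all clauses and variables so that no internal arc of any gadget is over-subscribed; with such a choice pinned down, all remaining arc checks are routine.
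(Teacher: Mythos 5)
Your proposal is correct, and it takes a genuinely different route from the paper's. The paper bounds an arbitrary optimal FAS $F$ directly: it partitions the arcs of $F$ according to which gadgets they touch ($F_{VV}$, $F_{CC}$, $F_{C_{m+1}}$, the clause parts $F_{*C_j}\cup F_{C_j}$ and the variable parts $F_{V_i}\cup F_{V_iC_{m+1}}$), shows by exhibiting cycles that would otherwise be uncovered that each part meets a quota ($6n(n-1)$, $3m(m+1)/2$, $1$, at least $u_j$ per clause except for at most one clause that may fall one short, and $5$ or $6$ per variable gadget depending on whether it is adjacent to that deficient clause), and sums. You instead exhibit a fractional cycle packing of total weight $k$ and invoke weak LP duality; the half-weights on $\S^{\top}_i\cup\S^{\bot}_i$ and on the two triangles per literal are exactly what is needed, and your observation that integral weights cannot reach value $k$ when the formula is unsatisfiable (by Theorem~\ref{thm:reduc} and Lemma~\ref{triangleequivcycle}) correctly explains why fractionality is essential. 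The bookkeeping you defer does go through: each of $x^1_i$, $x^2_i$, $\bar x_i$ is the head of at most one backward arc coming from a non-dummy clause gadget, so the arcs $x^1_ix^2_i$, $x^2_it_i$, $\bar x_is_i$ each receive at most one extra half on top of their load $\tfrac12$ from group~(3), while the dummy triangles only use the arcs $\bar x_ix^1_i$, $\bar x_ix^2_i$, $\bar x_it_i$ of load $0$; and within a clause gadget the demand $\tfrac32$ (resp.\ $1$) on the internal arcs is below the capacity $2$ (resp.\ $1$), so any assignment putting at most two literals on the same internal arc works, the forward arcs from $\Tv$ to $\Tc$ being used at most once each. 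What the paper's approach buys is a purely combinatorial argument that never needs to construct any packing; what yours buys, besides a more modular verification, is the sharper conclusion that the fractional cycle packing number of $\T$ equals $k$ (combining your bound with Lemma~\ref{lem:tournamenttight}), which nicely contrasts with the integral packing number that equals $k$ only when the formula is satisfiable.
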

\begin{proof}
We suppose that $\T$ is equiped with the ordering defined in Lemma~\ref{lem:tournamenttight}.
Let $F=FAS^*(\T)$ be an optimal FAS of $\T$.
Given an arc $a$, let $v(a) = \{t(a), h(a)\}$.
Let us partition the arcs of $\T$ into the following sets.
For any $i \in [n], j \in [m+1]$, let us define
\begin{itemize}
\item $A_{V_i} = \{a \in A : v(a) \subseteq V_i \}$
\item $A_{C_j} = \{a \in A:v(a) \subseteq C_j \}$
\item $A_{V_iC_j} = \{a \in A: |v(a) \cap V_i|=|v(a) \cap C_j| = 1 \}$
\item $A_{V_iV_{i'}} = \{a \in A : |v(a) \cap V_i|=|v(a) \cap V_{i'}| = 1 \}$ where $i \neq i'$
\item $A_{C_jC_{j'}} = \{a \in A : |v(a) \cap C_j|=|v(a) \cap C_{j'}| = 1 \}$
where $j \neq j'$
\end{itemize}
For any $i,i' \in [n],\ j,j' \in [m+1]$ and $X \in \{V_i, C_j, V_iC_j, V_iV_{i'}, C_j,C_{j'}\}$,  we also define the corresponding sets $F_{X}$ in $FAS^*(\T)$, where for example $F_{V_i} = F \cap A_{V_i}$.
In addition, for any $j \in [m+1]$ we define $F_{*C_j} = \bigcup_{i \in [n]}F_{V_iC_j}$.

Let $\T'_v$ be the directed graph ($\T'_v$ is not a tournament) obtained by starting from $\T_v$ and only keeping arcs in $A_{V_iV_{i'}}$ for any $i,i' \in [n]$ with $i \neq i'$. 
As $F$ is FAS of $\T$, $F_{VV}=\bigcup_{i,i' \in [n], i \neq i'}F_{V_iV_{i'}}$ must be a FAS of $\T'_v$. As according to Lemma~\ref{metaclique} there is a cycle packing of size $6n(n-1)$ in $\T'_v$, we get $|F_{VV}| \ge 6n(n-1)$.
The same arguments hold for the clause part, and thus with $F_{CC}=\bigcup_{j,j' \in [m+1], j \neq j'}F_{C_jC_{j'}}$, we get $|F_{CC}| \ge 3m(m+1)/2$.
As $C_{m+1}$ is a triangle, we also get $|F_{C_{m+1}}| \ge 1$.

For any $j \in [m]$, let $u_j \in \{2,3\}$ be equal to the size of the clause $j$ (we also have $u_j=|\{a \in \bA{A}(\T) : \exists i \in [n], h(a) \in V_i\mbox{ and }t(a) \in C_j\}|$).
Let $L = \{j \in [m]: |F_{*C_j} \cup F_{C_j}| \ge u_j\}$ be informally the set of clauses where $F$ spends a large (in fact larger than the $u_j$ required) amount of arcs, and $S = [m] \setminus L$.

%Dans ce qui suit : cas où Fcj = 0? utile à préciser? t'as raison mais pas utile
%+ pourquoi a2 et h(a2)t(a1) \notin F?
%car h(a_2) neq t(a), mais j'ajotue %rien dans la preuve
Let us prove that for any $j \in S$, $|F_{C_j}| \ge u_j-1$. 
Let us first consider the case where $u_j=3$.
Suppose by contradiction than $F_{C_j}=\{a\}$ (arguments will also hold for $F_{C_j}=\emptyset$). 
Remember that $\sigma(C_j) = (c^1_j,c^2_j,c^3_j)$ (there are only forward arcs).
As $|F_{*C_j}| \le 1$, there exists $i \in [n]$ and two arcs $a_1, a_2$ not in $F$ such that $t(a_1)=c^3_j$, $h(a_1) \in V_i$, $t(a_2) = h(a_1)$, and $h(a_2) \neq t(a)$. Thus, $(t(a_1),t(a_2),h(a_2))$ is a triangle using no arc of $F$, a contradiction.
As the same kind of arguments holds for the case where $u_j=2$, we get that for any $j \in S$, $|F_{C_j}| \ge u_j-1$ (implying also $|F_{*C_j}|=0$).
 
Let us now prove that $|S| \le 1$. Suppose by contradiction that $|S| \ge 2$.
Let $j_1$ and $j_2$ be in $S$. For any $l \in [2]$, let define $a_l$ such that there exists $i_l \in [n]$ with $t(a_l) \in C_{j_l}$ and $h(a_1) \in V_{i_l}$. Notice that we may have $i_1 = i_2$, but we always have $h(a_1) \neq h(a_2)$. 
Moreover, as $a_i$ is the unique backward arc of $\T$ with $t(a) \in \bigcup_{j \in [m]} C_j$,
%a la place de la phrase précédente, j'aurais mis : as $h(a_l)$ has only one backward arc in $\bigcup_{j \in [m]} C_j$, oui je vois mais faudrait definir qu'un sommet à un backward arc
we get that $a_3=h(a_1)t(a_2)$ and $a_4=h(a_2)t(a_1)$ are forward arcs of $\T$. As $|F_{*C_{j_1}}|=|F_{*C_{j_2}}|=0$ we know that $a_l \notin F$ for $l \in [4]$.
Thus, $(t(a_1),h(a_1),t(a_2),h(a_2),t(a_1))$ is a cycle using no arc of $F$, a contradiction.

Let $L' = \{i \in [n]: \exists a \in \T$ s.t. $ h(a) \in V_i\mbox{ and }t(a) \in C_j,\ j \in S\}$. Notice that if $S = \emptyset$ then $L' = \emptyset$, and otherwise $|L'|=u_{j_0}$, where $S = \{j_0\}$. Let $S' = [n] \setminus L'$.
For any $i \in [n]$, let $\bA{A}_{V_iC_{m+1}} = \bA{A}(\T) \cap A_{V_iC_{m+1}}$.
Recall that $\bA{A}_{V_i C_{m+1}} = c^u_{m+1}\bar{x}_i$ for $u \in [3]$ where $\bar{x}_i \in V_i$. Moreover, for any  $x \in \{\bar{x}_i,x^1_i,x^2_i\}$, let $A_{x V_i}=\{a \in \T : t(a)=x \mbox{ and } h(a) \in V_i\}$. Notice that $|A_{\bar{x}_i V_i}|=4$,  $|A_{x^1_i V_i}|=2$ and $|A_{x^2_i V_i}|=1$.

Let us prove that for any $i \in S'$, $|F_{V_i} \cup F_{V_iC_{m+1}}| \ge 5$. 
If $A_{\bar{x}_i V_i} \subseteq F$, then as $F_{V_i}$ must be a FAS of $V_i$ and 
$A_{\bar{x}_i V_i}$ is not a FAS of $V_i$, there exists at least another arc in $F_{V_i}$ and we get $|F_{V_i}| \ge 5$. Otherwise, $\bA{A}_{V_iC_{m+1}} \subseteq F$ (if it is not the case, there is a cycle $c^u_{m+1}\bar{x}_i v$ where $v \in V_i$ is a out-neighbour of $\bar{x}_i$). Then, as $FAS^*(V_i) \ge 2$, $|F_{V_i} \cup F_{V_iC_{m+1}}| \ge 5$.

% Cas ou L'=0 à préciser? non, devrait marcher direct dans la formule finale
Let us finally prove that for any $i \in L'$, $|F_{V_i} \cup F_{V_i C_{m+1}}| \ge 6$. 
As $i \in L'$, there is an arc $a \in \T$ with $h(a) \in V_i\mbox{ and }t(a) \in C_{j_0}$ where 
$S = \{j_0\}$. Let $x = h(a)$. Notice that $x \in \{\bar{x}_i,x^1_i,x^2_i\}$. As $|F_{*C_{j_0}}|=0$ we get that $A_{xV_i} \subseteq F_{V_i}$ (otherwise there would be a cycle with one vertex in $C_{j_0}$, $x$, and an out-neighbour of $x$ in $V_i$).
\subparagraph*{Case 1: $x=\bar{x}_i$.}As $F_{V_i}$ must be a FAS of $V_i$, $F$ needs two other arcs in $A_{V_i}$ and we get $|F_{V_i}| \ge 6$.
\subparagraph*{Case 2: $x=x^1_i$.}If $A_{\bar{x}_i V_i} \subseteq F$ then 
$|F_{V_i} \cup F_{V_i C_{m+1}}| \ge 6$. Otherwise, as before we get $\bA{A}_{V_i C_{m+1}} \subseteq F$, and as $A_{x^1_i V_i}$ is not a FAS of $V_i$, $F$ need another arc in $V_i$, implying $|F_{V_i} \cup F_{V_i C_{m+1}}| \ge 6$. 
\subparagraph*{Case 3: $x=x^2_i$.}If $A_{\bar{x}_i V_i} \subseteq F$ then 
as  $A_{x^2_i V_i} \cup A_{\bar{x}_i V_i}$ is not a FAS of $V_i$, $F$ need another arc in $V_i$, implying $|F_{V_i}| \ge 6$. Otherwise, as before we get $\bA{A}_{V_i C_{m+1}} \subseteq F$, and as $A_{x^1_i V_i}$ is not a FAS of $V_i$, $F$ need two other arcs in $V_i$, implying $|F_{V_i} \cup F_{V_iC_{m+1}}| \ge 6$.

Putting all the pieces together, we get:

\begin{align*}
|F| = &|F_{VV}| + |F_{CC}| + |F_{C_{m+1}}| + \sum_{j \in L}(|F_{*C_j} \cup F_{C_j}|)+ \sum_{j \in S}(|F_{*C_j} \cup F_{C_j}|)\\
& + \sum_{i \in S'}(|F_{V_i} \cup F_{V_i C_{m+1}}|)+\sum_{i \in L'}(|F_{V_i} \cup F_{V_i C_{m+1}}|) \\
& \ge 
6n(n-1) + \frac{3m(m+1)}{2} + 1 + \sum_{j \in L} u_j + \sum_{j \in S}(u_j-1)+5|S'|+6|L'| \\
& \ge 6n(n-1) + \frac{3m(m+1)}{2} + 1 + \sum_{j \in [m] } u_j + 5n = k
\end{align*}
\end{proof}

Then, using Lemma~\ref{lem:tournamenttight} and Lemma~\ref{lem:tightlb}, we get the \NP-hardness of \maxtttight~and \maxcttight.

\begin{theorem}
\label{tightNP}
The problems \maxtttight~and \maxcttight~are \NP-Hard.
\end{theorem}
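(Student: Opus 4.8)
The plan is to assemble the pieces already established into a single reduction argument. We have the reduction $f$ from \tsatt{} to a tournament $\T$, and by \autoref{thm:reduc} together with \autoref{triangleequivcycle}, $F$ is satisfiable if and only if $\T$ has a triangle packing of size $k$ if and only if $\T$ has a cycle packing of size $k$, where $k = 6n(n-1) + 3m(m+1)/2 + 2n + \alpha + 1$ is the threshold value. So the only thing missing to get \NP-hardness of the ``tight'' versions is to verify that the instance produced by $f$ is a legal instance of \maxtttight{} (resp. \maxcttight{}): namely that we can, in polynomial time, exhibit an ordering of $\T$ whose backward arc set has size exactly $k$, and that $k$ is in fact the size of an optimal feedback arc set of $\T$ (so that the ``tight'' target is well-defined and coincides with the threshold from the earlier proof).

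First I would invoke \autoref{lem:tournamenttight}: it constructs in polynomial time a linear ordering of $\T$ with exactly $k$ backward arcs, which immediately certifies that $\T$ admits a FAS of size at most $k$, and it provides the ordering $\sigma$ required as part of the input of \maxtttight{}/\maxcttight{}. Then I would invoke \autoref{lem:tightlb}, which gives the matching lower bound $|FAS^*(\T)| \ge k$; combining the two, $k$ is exactly the optimal FAS size, so $\sigma$ is an optimal ordering and the ``tight'' target $k$ is legitimate. At this point, deciding whether $\T$ (with this ordering) has an arc-disjoint triangle packing (resp. cycle packing) of size $k$ is, by \autoref{thm:reduc} and \autoref{triangleequivcycle}, exactly deciding whether $F$ is satisfiable. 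Since $f$ runs in polynomial time and \tsatt{} is \NP-hard, this is a polynomial-time reduction, establishing \NP-hardness of both \maxtttight{} and \maxcttight{}.

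Since all the hard work has already been done in the earlier lemmas, there is essentially no obstacle here; the only thing to be slightly careful about is bookkeeping — making sure that the ``$k$'' appearing in \autoref{lem:tournamenttight}, \autoref{lem:tightlb}, \autoref{thm:reduc}, and \autoref{triangleequivcycle} is literally the same quantity $6n(n-1) + 3m(m+1)/2 + 2n + \alpha + 1$, and that membership in \NP{} (hence \NP-completeness, if one wants it) is clear because a triangle or cycle packing of size $k$ is a polynomial-size certificate. Concretely, the proof is: given $F$, compute $\T = f(F)$ and the ordering $\sigma$ of \autoref{lem:tournamenttight} in polynomial time; by \autoref{lem:tightlb} this ordering is optimal with $k$ backward arcs, so $(\T,\sigma)$ is a valid instance of \maxtttight{} (resp. \maxcttight{}) with target $k$; and by \autoref{thm:reduc} (resp. \autoref{thm:reduc} plus \autoref{triangleequivcycle}) this instance is a yes-instance if and only if $F$ is satisfiable. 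This completes the reduction and hence the theorem.
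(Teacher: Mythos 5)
Your proposal is correct and follows exactly the paper's route: it combines \autoref{lem:tournamenttight} and \autoref{lem:tightlb} to certify that the reduction's tournament has optimal FAS size equal to the threshold $k$, and then invokes \autoref{thm:reduc} and \autoref{triangleequivcycle} to conclude. The paper's own proof is just this one-line assembly, so your more explicit bookkeeping is a faithful (and slightly more careful) version of the same argument.
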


Finally, as the size $s$ of the required packing in Theorem~\ref{thm:reduc} verifies $s = \mathcal{O}((n+m)^2)$, 
and as under $ETH$ $3$-SAT cannot be solved in $2^{o(n+m)}$~\cite{cygan2015parameterized}, we also get the following result.

\begin{theorem}
\label{triangleETH}
Under $ETH$, the problems \kmaxtt~and \kmaxct~cannot be solved in $2^{o(\sqrt k)}$.
\end{theorem}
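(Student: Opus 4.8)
The plan is to chain together the two ingredients already in place: the correctness of the reduction $f$ (Theorem~\ref{thm:reduc}, together with Lemma~\ref{triangleequivcycle} for the cycle version), and the ETH-based lower bound for $3$-SAT. First I would recall that the variant \tsatt{} used throughout is still $3$-SAT-complete, and that $3$-SAT on instances with $n$ variables and $m$ clauses cannot be solved in $2^{o(n+m)}$ under ETH (this is the sparsification lemma applied to the standard reduction, see~\cite{cygan2015parameterized}); moreover one may assume $m = \Theta(n)$ so that $n+m = \Theta(n)$. The preprocessing step at the start of Section~\ref{nphardness} (padding to get $n \equiv 1,3 \pmod 6$ and $m+1 \equiv 1,3 \pmod 6$) only adds $O(n+m)$ variables and clauses, so it does not affect the asymptotics of $n+m$.

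Next I would track the size of the target packing. By Theorem~\ref{thm:reduc}, $F$ is satisfiable iff $\T = f(F)$ has a triangle packing of size $s = 6n(n-1) + 3m(m+1)/2 + 2n + \alpha + 1$, where $\alpha = |\bA{A}^{vc}(\T)| \le 3m$; by Lemma~\ref{triangleequivcycle} the same threshold $s$ works for cycle packings. Since $\T$ is built in polynomial time and $s = O((n+m)^2)$, we may set $k := s$, so that $\sqrt{k} = O(n+m)$, i.e.\ $n+m = \Omega(\sqrt{k})$. The key implication is the contrapositive: suppose \kmaxtt{} (resp.\ \kmaxct) admitted an algorithm running in time $2^{o(\sqrt{k})}$. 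Given a \tsatt{} instance $F$ with parameters $n,m$, compute $\T = f(F)$ and $k = s(n,m)$ in polynomial time, then run this hypothetical algorithm to decide whether $\T$ has a triangle (resp.\ cycle) packing of size $\ge k$; by Theorem~\ref{thm:reduc} (resp.\ Lemma~\ref{triangleequivcycle}) this decides satisfiability of $F$. The total running time is $\mathrm{poly}(n+m) + 2^{o(\sqrt{k})} = 2^{o(n+m)}$, since $\sqrt{k} = O(n+m)$, contradicting ETH.

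I do not expect any serious obstacle here; this is a routine ``composition of a polynomial reduction with an ETH lower bound'' argument, and every nontrivial piece (correctness and polynomial-time computability of $f$, the $O((n+m)^2)$ bound on the solution size) is already established in the excerpt. The one point that needs a little care is making sure the padding used to enforce the divisibility conditions on $n$ and $m$ keeps $n+m$ linear in the original instance size — which it does, since at most $O(n)$ dummy variables and $O(1)$ padding iterations (each adding $6$ variables and at most $2$ clauses) are used — and likewise that $\alpha \le 3m$ so that $s$ really is quadratic rather than something larger. With those observations, the proof is just the two-line reduction-composition above; I would write it out for the triangle case and remark that the cycle case is identical after invoking Lemma~\ref{triangleequivcycle} instead of Theorem~\ref{thm:reduc} directly.
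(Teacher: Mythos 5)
Your proof is correct and follows essentially the same route as the paper, which likewise observes that the target packing size $s$ from Theorem~\ref{thm:reduc} satisfies $s = O((n+m)^2)$ and composes the reduction $f$ with the $2^{o(n+m)}$ ETH lower bound for $3$-SAT; your write-up is merely more explicit about the transfer of that bound to \tsatt{} and about the padding. (One minor nit: $\alpha$ also counts the $3n$ dummy arcs $c^u_{m+1}\bar{x}_i$, so $\alpha \le 3m+3n$ rather than $\alpha \le 3m$, but this changes nothing since $s$ is still $O((n+m)^2)$.)
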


\section{FPT-algorithm and (vertex-)linear kernel for \texorpdfstring{\kmaxtt}{k-MaxTT}}
\label{kernel}
% \begin{qproblem}{\maxtt}
% {$T$ a tournament and $k\in \mathbb{N}$}
% {$true$ if and only if there a collection $\mct$ of arc-disjoint triangles of $T$ such that $|\mct| \geq k$.}
% \end{qproblem}

In this section, we focus on the parameterized version of the \maxtt~problem and provide an \FPT-algorithm for it as well as a kernel with a linear number of vertices.

First using a classical technique of coulor coding~\cite{AlonYZ95} for packing subgraph of bounded size, we obtain the following result.

\begin{theorem}
There exists an algorithm with running time $O^*(2^k)$ to solve \kmaxtt.
\end{theorem}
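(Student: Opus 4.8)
**Proof plan for: "There exists an algorithm with running time $O^*(2^k)$ to solve \kmaxtt."**

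The plan is to use the color-coding / subset-convolution technique for packing bounded-size subgraphs, exactly as hinted. First I would observe that if a triangle packing of size $k$ exists, it touches at most $3k$ arcs of $\T$. Since each triangle in a tournament is determined by its three vertices, and we only care about arc-disjointness, it suffices to search for a set of $k$ pairwise arc-disjoint triangles. The standard approach is a dynamic program over subsets of a small random sample, but here the cleanest route is a direct $2^{O(k)}$ algorithm via the representative-sets or color-coding framework: randomly color the \emph{arcs} of $\T$ with $3k$ colors, and look for a ``colorful'' triangle packing, i.e. $k$ triangles whose $3k$ arcs receive $3k$ distinct colors. Such a colorful packing is automatically arc-disjoint, and if any arc-disjoint packing of size $k$ exists it becomes colorful with probability at least $(3k)!/(3k)^{3k} = 2^{-O(k)}$, so $2^{O(k)}$ repetitions suffice; derandomization uses standard perfect hash families, adding only a $2^{O(k)}\log n$ factor.

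The key step is the dynamic program that detects a colorful packing in time $O^*(2^{3k})$, which I would then need to refine to $O^*(2^k)$ as claimed. For the $O^*(2^{3k})$ version: index subproblems by subsets $S$ of the $3k$ color classes, with $P[S] = \mathrm{true}$ iff the sub-tournament restricted to arcs colored in $S$ contains $|S|/3$ arc-disjoint triangles each colorful on three colors of $S$; the recurrence picks one triangle $(a,b,c)$ whose arcs use a $3$-subset $S' \subseteq S$ and checks $P[S \setminus S']$. Enumerating triangles costs $O(n^3)$ and there are $2^{3k}$ subsets, giving $O^*(8^k)$. To reach $O^*(2^k)$, I would instead color the \emph{vertices} of $\T$ with $3k$ colors is not quite right since triangles can share vertices; the correct fix, standard for this bound, is to note that a size-$k$ arc-disjoint triangle packing spans at most $3k$ vertices, so color vertices with $3k$ colors, seek a packing whose vertex set is colorful — wait, triangles may share vertices, so instead one colors with a number of colors equal to the number of vertices actually used and runs the subset DP over vertex-color subsets; since that number is at most $3k$ one still only gets $8^k$. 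The genuine $O^*(2^k)$ comes from a different encoding: run the DP over subsets of a $k$-element color set where colors are assigned to \emph{triangles} via a more clever argument, or — most simply — cite that packing $k$ copies of a fixed graph $H$ with $|V(H)|=3$ has an $O^*(2^k)$ algorithm via the "multilinear detection" / algebraic framework of Koutis–Williams applied to a polynomial whose monomials encode packings.

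The main obstacle, and the part I would spend the most care on, is getting the constant in the exponent down from the naive $8^k$ (colorful on $3k$ arc-colors) or $2^{O(k\log k)}$ (brute color-coding) to exactly $2^k$. The honest way to do this is the algebraic one: build a polynomial $Q$ over $\mathrm{GF}(2^\ell)[x_1,\dots,x_k, \{y_a\}_{a \in A}]$ in which there is one multilinear monomial $x_1\cdots x_k \prod_{a \in T} y_a$ per candidate packing, so that $Q$ has a multilinear term in the $x_i$'s iff a size-$k$ arc-disjoint packing exists; the $y_a$ variables enforce arc-disjointness by killing any monomial where an arc repeats (its $y_a^2$ survives but is flagged), and multilinear detection in the $k$ variables $x_i$ runs in $O^*(2^k)$ by Williams' algorithm. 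I would assemble $Q$ as a sum over an arbitrary linear order of the triangles, in a layered ``$\le k$ triangles chosen so far'' fashion, evaluating it at random points of a large enough extension field; correctness is one-sided error, derandomization via the usual techniques. I would present the algebraic route as the main argument and relegate the simpler color-coding DP to a remark, since only the former actually achieves the stated $O^*(2^k)$ bound.
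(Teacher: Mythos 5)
Your first route is, in fact, the paper's entire proof: color the arcs with $3k$ colors using a $3k$-perfect hash family (of size $2^{O(k)}\log^2 m$, via Schmidt--Siegel), and for each coloring run a dynamic program over subsets of the color set that asks whether the arcs colored inside a given set of $3(p+1)$ colors contain $p+1$ arc-disjoint triangles that are colorful on exactly those colors. The paper then simply writes the resulting single-exponential bound as $O^*(2^k)$; it does not do anything sharper than the scheme you describe and then discard. So your strict accounting (subsets of a $3k$-element color set give a $2^{3k}$ factor, and the hash family contributes another $2^{O(k)}$) is fair, but your conclusion that this route ``only'' yields $O^*(8^k)$ and must be abandoned led you away from the intended argument rather than toward a refinement of it; the statement is to be read as a $2^{O(k)}$ bound obtained exactly this way.

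The genuine gap is in the replacement you propose as the ``honest'' way to reach $O^*(2^k)$. Multilinear detection in the $k$ variables $x_1,\dots,x_k$ does not enforce arc-disjointness of the $k$ chosen triangles: in the Koutis--Williams framework, cancellation of non-multilinear terms applies only to the variables that carry the random group-algebra (or field) assignments, i.e.\ precisely the variables whose number appears in the exponent of the running time. Your $y_a$ variables, which are supposed to ``kill any monomial where an arc repeats,'' have no such mechanism: a monomial containing $y_a^2$ is a perfectly ordinary monomial and is not detected or discarded by multilinear detection over the $x_i$'s alone. To make arc-disjointness visible to the algebraic sieve you would have to include the arc variables among the detected ones, and since a packing uses $3k$ arcs this brings you back to $O^*(2^{3k})=O^*(8^k)$; obtaining $O^*(2^k)$ for a $3$-set-packing-type problem by these means is not a routine application of multilinear detection, and you give no construction that achieves it. As written, the step on which your claimed bound rests is unsubstantiated, whereas the simple color-coding argument you relegated to a remark is the one the paper actually uses.
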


\begin{proof}
Let $\T$ be an instance of \maxtt~and denote by $n$ its number of vertices, by $m=n(n-1)/2$ its number of arcs. Moreover, we label by $\{e_1,\dots ,e_m\}$ the arcs of $\T$. If $\T$ is a positive instance then it admits a triangle packing $\S$ with $k$ triangles and then containing $3k$ arcs. So we use a a $3k$-perfect family of hash functions from $\{e_1,\dots , e_m\}$
to $\{1, \dots ,3k\}$, that is a set of colorations of the arcs of $\T$  such that for
every subset $U$ of $\{e_1,\dots , e_m\}$ of size $3k$, there exists one
of these colorations that colors the elements of $U$ with $3k$
different colors. Schmidt and Siegal~\cite{SchmidtS90} explicitly provide such a family of colorings of size $2^{O(k)} \log^2 m$ which can be computed in time $O^*(2^{O(k)})$. 

Now for each colouring, we use dynamic programming to obtain a triangle packing of $k$ triangles whose all arcs use different colours. The instance $\T$ is positive if and only if at least one of the colourings contains such a packing. So being given a coloring 
$c:\{e_1,\dots , e_m\}\ra \{1, \dots ,3k\}$, we first compute for every set of 3 colours $\{a,b,c\}$ if the arcs coloured with $a$, $b$ or $c$ induce a triangle using 3 different colours. Then recursively for every set $C$ of $3(p+1)$ colors with $p\in [k-1]$ we test if the arcs colored with the colour of $C$ induce $p+1$ arc-disjoint triangles whose arcs use all the colours of $C$. We do this by looking for every subset $\{a,b,c\}$ of $C$ if there exists a triangle using colors $a$, $b$ and $c$ and a collection of $p$ arc-disjoint triangles whose arcs use all the colors of $C\setminus \{a,b,c\}$. It is clear that $C$ being fixed we can find this collection of triangles in time $O(p^3)=O(k^3)$. Finally, we answer the colour-version of the problem in time $O^*(k^32^k)=O^*(2^k)$. As we have $O^*(2^k)$ different colourings we obtain the announced running time.
\end{proof}

Moreover, we obtain the following kernelization algorithm.

\begin{theorem}
\kmaxtt~admits a kernel with $O(k)$ vertices.
\end{theorem}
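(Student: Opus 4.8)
The plan is to build the kernel around a small ``modulator'' whose deletion makes the tournament transitive, and then to shrink the transitive part with a few reduction rules whose correctness rests on rerouting (exchange) arguments.

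\textbf{A small core.} First I would greedily compute a maximal set $\mathcal{P}$ of \emph{vertex}-disjoint triangles (checking all triples and adding any triangle disjoint from the ones already picked runs in polynomial time). If $|\mathcal{P}|\ge k$ these $k$ triangles are in particular arc-disjoint, so we return a trivial yes-instance. Otherwise $|\mathcal{P}|\le k-1$; set $W=V(\mathcal{P})$, so $|W|\le 3(k-1)<3k$. Since $\mathcal{P}$ is maximal, every triangle of $\T$ meets $W$, hence $\T\setminus W$ is triangle-free; a triangle-free tournament has no cycle at all, so it is transitive and has a unique transitive order $\sigma_0=(z_1\to\cdots\to z_\ell)$ of its vertices. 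Thus $W$ is a modulator of size $O(k)$ to a transitive tournament, and all that remains is to cut $V\setminus W$ down to $O(k)$ vertices.

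\textbf{Structure.} Every triangle of $\T$ has exactly one, two, or three vertices in $W$ (never zero). A triangle using two vertices $z_i,z_j$ of $V\setminus W$ with $i<j$ must have the form $z_i\to z_j\to u\to z_i$ for some $u\in W$, because $z_i\to z_j$ is forced by $\sigma_0$; in particular two such vertices disagree on whether they dominate $u$, i.e.\ they have different out-neighbourhoods inside $W$. Hence a triangle uses at most two vertices of $V\setminus W$, so any triangle packing of size $k$ uses at most $2k$ vertices of $V\setminus W$ — the whole issue is to keep only $O(k)$ of them in the kernel.

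\textbf{Reduction rules on $V\setminus W$.} I would use three kinds of rules. (R1) Delete any $x\in V\setminus W$ lying in no triangle of $\T$ (vacuously safe). (R2) A ``rich structure $\Rightarrow$ yes'' rule: if some $u\in W$ has many positions $t$ with $u\to z_t$ and $z_{t+1}\to u$, then the triangles $z_t\to z_{t+1}\to u\to z_t$ taken at far-apart such positions are pairwise arc-disjoint, so enough of them certify a yes-instance; a quantitatively matching statement holds when many distinct out-neighbourhoods in $W$ are realised by vertices of $V\setminus W$ (one chains such ``descents'' along $\sigma_0$ across the different types). Whenever the relevant count reaches the right threshold we output yes. (R3) A ``twin compression'' rule: among vertices of $V\setminus W$ sharing the same out-neighbourhood in $W$ — by the structure paragraph these are pairwise twins and never appear together in a triangle — keep only a bounded number and delete the rest. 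Correctness of (R3) is the technical heart: given a size-$k$ packing, each triangle that uses a deleted vertex is rerouted onto a surviving twin of it; two rerouted triangles can conflict only if they share a vertex of $W$, so a bounded surplus of survivors together with a Hall/expansion-type argument yields a conflict-free reassignment, and the packing number is preserved; $\Leftarrow$ is trivial since every subtournament's packing is a packing of $\T$.

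\textbf{Wrap-up and the main difficulty.} After exhaustively applying (R1)--(R3), $V\setminus W$ has $O(k)$ vertices, so the instance has $|W|+O(k)=O(k)$ vertices; all rules run in polynomial time and preserve the property ``$\T$ has an arc-disjoint triangle packing of size $\ge k$,'' which gives the kernel. I expect the genuine obstacle to be precisely Step (R2)/(R3): calibrating the thresholds in (R2) and the ``bounded number'' in (R3) so that exhaustive application leaves a \emph{linear} — not merely polynomial — number of vertices, while proving through the rerouting/Hall argument that no size-$k$ packing is destroyed. Getting a careless bound here naturally produces $O(k^2)$ vertices, and squeezing it to $O(k)$ is where the bulk of the work lies.
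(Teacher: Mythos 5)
Your overall strategy --- find a modulator $W$ of size $O(k)$ such that $\T\setminus W$ is transitive, then shrink the transitive part by reduction rules --- is plausible in outline, but as written the proof has a genuine gap: rules (R2) and (R3), which are exactly where the $O(k)$ bound has to come from, are announced rather than proved. You yourself observe that the natural calibration only yields $O(k^2)$ vertices: each $u\in W$ can have $\Theta(k)$ ``descents'' along $\sigma_0$ without forcing $k$ arc-disjoint triangles through $u$, so the vertices of $V\setminus W$ can realise $\Theta(k\,|W|)=\Theta(k^2)$ distinct out-neighbourhood profiles in $W$ before (R2) ever fires, and keeping even one survivor per twin class already leaves $\Theta(k^2)$ vertices. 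No argument is given that squeezes this to linear. The rerouting in (R3) is also unspecified at the point where it matters: two triangles routed through twins $x,x'$ and the same pair $u,v\in W$ share the arc $uv$, so ``reassign each deleted vertex to a surviving twin'' needs a concrete bound on the number of survivors per class together with a proof that a conflict-free system of distinct representatives exists. As it stands this is a programme for a proof, not a proof, and the part you defer is precisely the part that is hard.

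For comparison, the paper sidesteps all of this by choosing the modulator differently: it greedily takes a maximal \emph{arc-disjoint} (not vertex-disjoint) triangle packing $X$. Maximality then gives not only that $\T\setminus V_X$ is transitive but also that no triangle has two vertices outside $V_X$ (such a triangle would use no arc of $X$ and could be added). Hence every triangle either lies inside $V_X$ or uses exactly one outside vertex, and the whole problem reduces to selecting $O(k)$ outside vertices. This is done with a maximum matching in the bipartite graph whose sides are the arcs of $\T[V_X]$ and the outside vertices, with an edge when they form a triangle; keeping only the matched outside vertices gives at most $2|V_X|\le 6k$ vertices, and safeness follows from an alternating-tree exchange argument using the maximality of the matching. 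Your choice of a vertex-disjoint packing discards exactly the structural property (no triangle with two vertices outside the core) that makes this clean argument possible; I would either switch to the arc-disjoint greedy packing and the matching argument, or else actually supply the missing threshold calibration and the Hall-type rerouting.
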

\begin{proof}
Let $\x$ be a maximal collection of arc-disjoint triangles of a tournament $\T$ obtained greedily. Let denote by $\Vx$ the vertices of the triangles in $X$ and by $\Ax$ the arcs of the subtournament induced by $\Vx$. Moreover, let $U$ be the remaining vertices of $V(\T)$, that is $U=V(\T)\setminus \Vx$. If $|\x| \geq k$, then $(\T,k)$ is a positive instance of \kmaxtt. Thus we may assume that $|\x| < k$, that is $|\Vx|< 3k$. Moreover, notice that $\T[U]$ is acyclic and that $\T$ does not contain a triangle with one vertex in $V_X$ and two in $U$ (otherwise $\x$ would not be maximal).\\
Let $\B$ be the (undirected) bipartite graph defined by $V(\B) = \Ax \cup \p$ and $E(\B) = \{au : a\in \Ax, u\in \p$ such that $ (h(a),t(a),u)$ forms a triangle in $\T \}$. Let $\M$ be a maximum matching of $\B$ and denote by $\Ap$ (resp. $\Up$) be the vertices of $\Ax$ (resp. $\p$) covered by $\M$. Moreover, we define $\Apb = \Ap \setminus \Ax$ and $\Upb= \Up \setminus \p$.\\
We now prove that $(\Vx \cup \Up, k)$ is a linear kernel of $(\T,k)$. Let $\S$ be an optimal solution of \kmaxtt~ which minimize the number of vertices of $\Upb$ belonging to a triangle of $\S$.
By previous remarks, we can partition $\S$ into $\S_X\cup F$ where $\S_X$ are the triangles of $\S$ included in $\T[\Vx]$ and $F$ are the triangles of $\S$ containing one vertex of $\Vx$ and two vertices of $U$. It is clear that $F$ corresponds to a union of vertex-disjoint stars of $B$ with centres in $U$. Denote by $\Uf$ the vertices of $F$ in $\p$. If $\Uf \subseteq \Up$ then $(\Vx \cup \Up, k)$ is immediately a kernel. Thus, we may suppose that there exists a vertex $x_0$ such that $x_0 \in \Uf \cap \Upb$. ~\\
We will build a tree rooted in $x_0$ with edges alternating between $F$ and $M$. For this let $H_0= \{x_0\}$ and construct recursively the sets $H_{i+1}$ such that
  \[
    H_{i+1}=\left\{
                \begin{array}{ll}
                  N_F(H_{i})\text{ if $i$ is even},\\
                  N_M(H_{i}) \text{ if $i$ is odd}
                \end{array}
              \right.
  \]
Where given a subset $S\subseteq U$, $N_F(S) = \{a \in \Ax : \exists s \in S$ s.t. $(h(a), t(a), s)\in F\}$ and given a subset $S\subseteq \Ax$, $N_M(S) = \{u \in U: \exists a \in \Ax$ s.t. $as\in M\}$. Notice that $H_i\subseteq U$ when $i$ is even and that $H_i\subseteq \Ax$ when $i$ is odd, and that all the $H_i$ are distinct as $F$ is a union of disjoint stars and $M$ 
a matching in $B$. Moreover, for $i\ge 1$ we call $T_i$ the set of edges between $H_i$ and $H_{i-1}$. Now we define the tree $T$ such that $V(T)= \bigcup_i H_i$ and $E(T)= \bigcup_i T_i$. As $T_i$ is a matching (if $i$ is even) or a union of vertex-disjoint stars with centers in $H_{i-1}$ (if $i$ is odd), it is clear that $T$ is a tree.
For $i$ being odd every vertex of $H_i$ is incident to an edge of $M$ otherwise $B$ would contain an augmenting path for $M$, a contradiction. So every leaf of $T$ is in $U$ and incident to an edge of $M$ in $T$ and $T$ contains as many edges of $M$ than edges of $F$.
Now for every arc $a\in \Ax \cap V(T)$ we replace the triangle of $\S$ containing $a$ and corresponding to an edge of $F$ by the triangle $(h(a),t(a),u)$ where $au\in M$ (and $au$ is an edge of $T$).
This operation leads to another collection of arc-disjoint triangles with the same size than $\S$ but it also strictly decreases the number of vertices in $\Upb$ incident to the solution, yielding a contradiction.

	Finally $\Vx\cup \Up$ can be computed in polynomial time and we have $|\Vx \cup \Up| \leq |\Vx| + |M| \leq 2|\Vx| \leq 6k$, which proves that the kernel has $O(k)$ vertices.
\end{proof}

\section{Polynomial algorithm in sparse tournaments}
\label{polysparse}

We now focus on the following optimization problems:
\begin{oproblem}{\maxtst}
{$\T$ a sparse tournament}
{$X$ a collection of arc-disjoint triangles of $\T$}
{Maximize $|X|$}
\end{oproblem}

and

\begin{oproblem}{\maxcst}
{$\T$ a sparse tournament}
{$X$ a collection of arc-disjoint cycles of $\T$}
{Maximize $|X|$}
\end{oproblem}

We will show that these two problems admit a polynomial algorithm.\\ Before this, let $\T$ be a sparse tournament according to the ordering of its vertices $\V(\T)$, that is the set of its backward arcs $\bA{A}(\T)$ is a matching. 
If a backward arc $xy$ of $\T$ lies between two consecutive vertices, then we can exchange the position of $x$ and $y$ in $\V(\T)$ to obtain a sparse tournament with few backward arc. 
So we can assume that the backward arcs of $\T$ do not contain consecutive vertices. 
Moreover, if a vertex $x$ of $\T$ is contained in no backward arc of $\T$ then call $A$ (resp. $B$) the vertices of $\T$ which are before (resp. after) $x$ in $\V(\T)$. 
Let $X_0$ be the set of triangles made from a backward arc from $B$ to $A$ and the vertex $x$. 
As $\T$ is sparse it is clear that $X_0$ is a set of disjoint triangles. Moreover, it can easily be seen that there exists an optimal packing of triangles (resp. cycles) of $\T$ which is the union of an optimal packing of triangles (resp. cycles) of $\T[A]$, one of $\T[B]$ and $X_0$. 
Thus to solve \maxtst~or \maxcst~on $\T$ we can solve the problem on $\T[A]$ and on $\T[B]$ and build the optimal solution for $\T$.
Therefore we can focus on the case where every vertex of $\T$ is the beginning or the end of a backward arc $\bA{A}(\T)$. We will call such a tournament a {\it fully sparse tournament}. So we first address the following optimization problem.

\begin{oproblem}{$\Pi$: {\sc MaxTT for Fully Sparse}}
{$\T$ a fully sparse tournament}
{$X$ a collection of arc-disjoint triangles of $\T$}
{Maximize $|X|$}
\end{oproblem}

Now let order the arcs $e_1, \dots, e_\ba$ of $\bA{A}(\T)$ such that for any $i \in [\ba - 1]$, $h(e_i) \lo h(e_{i+1})$. Moreover, let $\Gf$ be the digraph with vertex set $\Vf = \{e_i : i\in [b]\}$ and arc set $\Af$ defined by: $(e_ie_j)\in \Af$ if $(h(e_i), h(e_j), t(e_i))$ or $(h(e_i), t(e_j), t(e_i))$ is a triangle of $\T$.
%= \{e_i e_j : i,j \in [b] \text{ such that if } i<j, \text{ then } (h(e_i), h(e_j), t(e_i))$ is a triangle of $ \T, (h(e_i), t(e_j), t(e_i))$ is a triangle of $\T$ otherwise$\}$.

Let $\Pip$ be the following problem:
\begin{oproblem}{$\Pip$: {\sc Max Digon-Free Functional Subdigraph }}
{A digraph $\Gf=(\Vf, \Af)$ }
{$X$ a subset of $\Af$ such that the digraph induced by the arcs of $X$ is a functional digraph and digon-free}
{Maximize $|X|$}
\end{oproblem}

Let $X$ be a solution (not necessary optimal) of $\Pip(\Gf)$, and $e_ie_j$ an arc of $X$. We denote by $\Pi(e_ie_j)$ the triangle $(h(e_i), h(e_j), t(e_i))$ if $i<j$ and  otherwise. 
Given a triangle $\Pi(e_ie_j)$, let $s(e_j)$ be the second vertex of $\Pi(e_ie_j)$; in other words, if $\Pi(e_ie_j) = (h(e_i), t(e_j), t(e_i))$, then $s(e_j) = t(e_j)$ and $s(e_j)=h(e_j)$ otherwise.
Informally, $\Pi(e_ie_j)$ corresponds to the triangle formed by the backward arc $e_i$ and one vertex of $e_j$
, that vertex being $s(e_j)$
. In the same way, we define $\Pi(X) = \bigcup_{x\in X} \Pi(x)$.

\begin{claim}
\label{claim:equivPiPi'}
Let $X$ be a solution of $\Pip(\Gf)$. The set $X$ is an optimal solution if and only if $\Pi(X)$ is an optimal solution of $\Pi(\T)$.
\end{claim}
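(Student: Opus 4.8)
The plan is to establish a correspondence between solutions of $\Pip(\Gf)$ and triangle packings of $\T$ that preserves size, so that optimality transfers in both directions. First I would check that the map $X \mapsto \Pi(X)$ sends a solution of $\Pip(\Gf)$ to a triangle packing of $\T$ of the same cardinality. For this there are two things to verify: that each $\Pi(x)$ really is a triangle of $\T$ (immediate from the definition of $\Af$, since $e_ie_j \in \Af$ precisely when $(h(e_i),h(e_j),t(e_i))$ or $(h(e_i),t(e_j),t(e_i))$ is a triangle), and that distinct arcs $x \neq x'$ of $X$ give arc-disjoint triangles $\Pi(x), \Pi(x')$. The second point is where the functional and digon-free hypotheses enter. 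Every triangle $\Pi(e_ie_j)$ uses the backward arc $e_i$ together with two forward arcs incident to the pair $\{h(e_i), t(e_i)\}$ and the vertex $s(e_j)$. Since $X$ is a functional digraph, each $e_i$ is the tail of at most one arc of $X$, so no backward arc of $\T$ is reused; since $\T$ is fully sparse its backward arcs form a matching, hence the only way two triangles could share a forward arc is through a shared endpoint configuration, and here I would argue (using that $s(e_j)$ determines which forward arcs are used, and that digon-freeness prevents $e_ie_j$ and $e_je_i$ both being chosen) that the forward arcs are all distinct. This shows $|\Pi(X)| = |X|$ and $\Pi(X)$ is a valid packing.

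Next I would argue the reverse direction: from an (optimal) triangle packing $S$ of $\Pi(\T)$ one can extract a solution $X$ of $\Pip(\Gf)$ with $|X| \ge |S|$. The key structural observation (which should already be implicit in the setup, since $\T$ is fully sparse) is that in a fully sparse tournament \emph{every} triangle uses exactly one backward arc, and is therefore of the form $\Pi(e_ie_j)$ for a unique choice of $e_i$ (the backward arc) and a vertex $s(e_j)$ that is itself an endpoint of a backward arc $e_j$ — this last point uses that every vertex is an endpoint of some backward arc. So each triangle of $S$ names an arc $e_ie_j$ of $\Gf$, and arc-disjointness of the triangles translates into: each $e_i$ is named as a tail at most once (giving the functional property) and we never name both $e_ie_j$ and $e_je_i$ (digon-freeness) — otherwise the two corresponding triangles would share a forward arc between $h(e_i)$/$t(e_i)$ and $h(e_j)$/$t(e_j)$. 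Hence we get a solution $X$ of $\Pip(\Gf)$ with $\Pi(X) = S$, so $|X| = |S|$.

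Combining the two directions gives $\mathrm{OPT}(\Pip(\Gf)) = \mathrm{OPT}(\Pi(\T))$ and a size-preserving bijection between the two solution spaces (or at least maps in both directions preserving size), from which the claim follows immediately: if $X$ is optimal for $\Pip(\Gf)$ then $\Pi(X)$ has size $\mathrm{OPT}(\Pip(\Gf)) = \mathrm{OPT}(\Pi(\T))$ and is a valid packing, hence optimal; conversely if $\Pi(X)$ is optimal for $\Pi(\T)$ then $|X| = |\Pi(X)| = \mathrm{OPT}(\Pi(\T)) = \mathrm{OPT}(\Pip(\Gf))$, so $X$ is optimal.

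The main obstacle I anticipate is the bookkeeping in the arc-disjointness argument — precisely pinning down which two forward arcs each triangle $\Pi(e_ie_j)$ consumes (this depends on whether $i<j$ and on the value of $s(e_j)$), and then doing a short case analysis to show that two triangles sharing a forward arc would force either a repeated backward-arc tail (contradicting functionality) or a digon $e_ie_j, e_je_i$ in $X$ (contradicting digon-freeness), or would contradict that $\bA{A}(\T)$ is a matching. Everything else is essentially a translation of definitions; the content is entirely in making this case analysis clean, and I would want to state explicitly the lemma that in a fully sparse tournament every triangle contains exactly one backward arc, as it is the linchpin of both directions.
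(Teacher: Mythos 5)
Your proposal is correct and rests on the same correspondence as the paper: the forward direction (each arc of $X$ yields a triangle, functionality prevents reuse of a backward arc, the matching property plus digon-freeness prevent two triangles from sharing a forward arc) is exactly the paper's case analysis, which you sketch rather than carry out but correctly identify as the only delicate point. Where you differ is in the converse: the paper only proves that if $X\subseteq \Af$ and $\Pi(X)$ happens to be a triangle packing then $X$ is a solution of $\Pip(\Gf)$, and then concludes tersely that the two maximization problems have the same optimum; you instead start from an \emph{arbitrary} triangle packing $S$ of $\T$ and lift it to $\Gf$, using the structural fact (which you rightly flag as the linchpin) that in a fully sparse tournament every triangle contains exactly one backward arc and its third vertex is an endpoint of another backward arc. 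This version makes the inequality $\mathrm{OPT}(\Pi(\T))\le\mathrm{OPT}(\Pip(\Gf))$, and hence the ``only if'' direction, fully explicit, which the paper leaves implicit. One small inaccuracy: you cannot in general assert $\Pi(X)=S$, since when both $(h(e_i),h(e_j),t(e_i))$ and $(h(e_i),t(e_j),t(e_i))$ are triangles the map $\Pi$ makes a canonical choice that may differ from the triangle actually present in $S$; but since distinct triangles of $S$ have distinct backward arcs, the extracted arcs are distinct, so $|X|=|S|$ still holds and the optimality argument goes through unchanged.
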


\begin{proof}
Let $e_ie_j$ and $e_ke_l$ be two distinct arcs of $X$. We cannot have $e_i =e_k$ as $X$ induces a functional digraph in $G'$. Without loss of generality, we may assume that $i<k$, that is $h(e_i)\lo h(e_k)$. Moreover, we cannot have $t(e_i)=t(e_k)$ without contradicting that $\T$ is a sparse tournament. 
As $h(e_i)\lo h(e_k)$ the arc $h(e_i)s(e_j)$ is not an arc of $\Pi(e_ke_l)$. Thus if $\Pi(e_ie_j)$ and $\Pi(e_ke_l)$ share a common arc, it means that $s(e_j)t(e_i)=h(e_k)s(e_l)$. But in this case $e_i=e_l$ and $e_j=e_k$, implying $\{e_ie_j,e_ke_l\}$ is a digon of $G'$, which contradict the fact that $X$ is a solution $\Pip(\Gf)$. So, if $X$ is a solution of $\Pip(\Gf)$, then $\Pi(X)$ is an solution of $\Pi(\T)$. Notice that the size of the solution does not change.\\
On the other hand, if $X$ is a subset of the arcs of $G'$ such that $\Pi(X)$ is a solution of $\Pi(\T)$. We cannot have a vertex $e_i$ of $\Gf$ such that $d^+_X(e_i)>1$, since it would imply that the backward arc $e_i$ of $\T$ is covered by at least two triangles of $\Pi(X)$. So $X$ induces a functional subdigraph of $\Gf$. As previously the digraph induced by $X$ is also digon-free otherwise we would have 
two arc-disjoint triangles on only four vertices in $\Pi(X)$, which is impossible. Thus, $X$ is a solution of $\Pip(\Gf)$, and the solution of the same size.\\
The two problems $\Pi$ and $\Pip$ being both maximization problems, they have the same optimal solution.
\end{proof}

Now we show how to solve $\Pi'$ in polynomial time.

\begin{claim}
\label{claim:C3}
If $\Gf$ is strongly connected and has a cycle $C$ of size at least 3 then the solution of $\Pip(\Gf)$ is the number of vertices of $\Gf$.
\end{claim}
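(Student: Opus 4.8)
The upper bound is free: in any feasible $X$ the induced digraph has out-degree at most $1$ at every vertex, so $|X|\le|\Vf|$. Hence the entire content of the claim is to construct \emph{one} digon-free functional subdigraph of $\Gf$ in which every vertex has out-degree exactly $1$. My plan is to promote the given cycle $C$ (of length $\ge 3$) to the role of a ``sink cycle'' and to funnel every other vertex into $V(C)$ along shortest paths, so that no digon can arise.

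First I would define, for $v\in\Vf$, the quantity $d(v)$ = the length of a shortest directed path from $v$ to $V(C)$ in $\Gf$; strong connectivity guarantees $d$ is everywhere finite, and $d(v)=0$ exactly on $V(C)$. Then I set an out-assignment $f\colon\Vf\to\Vf$: for $v\in V(C)$, let $f(v)$ be the successor of $v$ around $C$; for $v\notin V(C)$, let $f(v)$ be any out-neighbour of $v$ with $d(f(v))=d(v)-1$ (one exists by the definition of $d$). The arc set $X=\{\,v f(v):v\in\Vf\,\}$ then lies in $\Af$, has exactly $|\Vf|$ arcs, and is functional by construction, so the only thing left to verify is that $X$ is digon-free.

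The digon check is the single place where something must be argued, and it is precisely where the hypothesis $|C|\ge 3$ is used. A digon in $X$ means $f(f(v))=v$ for some $v$ with $v\ne f(v)$. If $v\in V(C)$, then $f(v)$ and $f(f(v))$ are the first and second successors of $v$ along $C$, and $|C|\ge 3$ forbids $f(f(v))=v$. If $v\notin V(C)$, then $d(f(v))=d(v)-1<d(v)$; moreover $f(v)\notin V(C)$ (otherwise $f(f(v))\in V(C)$ while $v\notin V(C)$), so $f(f(v))=v$ would force $d(v)=d(f(v))-1<d(f(v))$, contradicting the previous strict inequality. Thus $X$ is feasible with $|X|=|\Vf|$, which matches the upper bound, so the optimum of $\Pip(\Gf)$ equals the number of vertices of $\Gf$. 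I would also remark that the length-$\ge 3$ hypothesis is genuinely necessary: if a strongly connected $\Gf$ had all its cycles of length $2$ (e.g.\ a single digon on two vertices), one could not give every vertex out-degree $1$ without creating a digon, and the optimum would be strictly below $|\Vf|$ — a sanity check confirming the construction is not wasteful. Everything beyond the digon analysis is routine.
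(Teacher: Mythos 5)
Your proof is correct and takes essentially the same approach as the paper: start from the cycle $C$ and funnel every remaining vertex into it along shortest paths, giving each vertex out-degree exactly one. Your distance-function formulation merely makes explicit the digon-freeness verification that the paper's iterative shortest-path construction leaves implicit.
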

\begin{proof}
We construct the arc set $X$ as follows: we start by taking the arcs of $C$. Then, while there is a vertex $x$ which is not covered by any arcs of $X$, we add to $X$ the arcs of the shortest path from $x$ to any vertex of $X$. By construction, every vertex $x$ of every arcs of $X$ verify $d^+_X(x)=1$, and $X$ is digon free. Since $X$ covers every vertex of $\Gf$, $|X|$ is a maximum solution of $\Pip(\Gf)$, that is the number of vertices of $\Gf$.
\end{proof}

A digraph $D$ is a {\it digoned tree} if $D$  arises from a non-trivial tree whose each edge is replaced by a digon.

\begin{claim}
If $\Gf$ is strongly connected and has only cycles of size 2 then $\Gf$ is a digoned tree.
\end{claim}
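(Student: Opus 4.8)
The plan is to establish two facts: first, that under these hypotheses every arc of $\Gf$ lies in a digon, so that $\Gf$ is obtained from some simple undirected graph $H$ by replacing each edge with a digon; and second, that $H$ is a non-trivial tree.

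First I would show that for every arc $e_ie_j \in \Af$ the reverse arc $e_je_i$ also belongs to $\Af$. Since $\Gf$ is strongly connected there is a directed path from $e_j$ to $e_i$; choose one, say $P$, of minimum length. As $\Gf$ is loopless (a loop $e_ie_i$ would require the three vertices $h(e_i),h(e_i),t(e_i)$ of the defining triangle to be pairwise distinct) we have $e_i\neq e_j$, so $|P|\ge 1$, and being shortest $P$ is a simple path. Then $e_ie_j$ followed by $P$ is a simple directed cycle of $\Gf$ of length $1+|P|$. By hypothesis $\Gf$ has no cycle of length at least $3$, hence $|P|=1$, i.e. $P$ is the single arc $e_je_i$, and $\{e_ie_j, e_je_i\}$ is a digon. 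Consequently, letting $H$ be the underlying simple undirected graph of $\Gf$ (an edge $uv$ whenever $uv\in\Af$ or $vu\in\Af$), the digraph $\Gf$ is precisely $H$ with every edge replaced by a digon.

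Next I would prove that $H$ is a tree. It is connected because $\Gf$ is strongly connected. Suppose $H$ contained a simple cycle $x_1x_2\cdots x_px_1$ with $p\ge 3$. Each edge $x_\ell x_{\ell+1}$ of $H$ corresponds to a digon of $\Gf$, so in particular $x_\ell x_{\ell+1}\in\Af$ for every $\ell$ (indices mod $p$); then $x_1x_2\cdots x_px_1$ is a directed cycle of $\Gf$ of length $p\ge 3$, contradicting the hypothesis. Hence $H$ is acyclic and connected, so it is a tree. It is non-trivial: this claim is applied when $\Gf$ is strongly connected and is not covered by Claim~\ref{claim:C3}, so we may assume $\Gf$ has at least one arc and therefore at least two vertices, whence $H$ has an edge. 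Thus $\Gf$ is a digoned tree.

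The only genuinely delicate point is the first step, namely the shortest-path argument forcing every arc into a digon; once that is in place, the tree property of the underlying graph and the conclusion follow immediately from the definition of a digoned tree. The remaining care is merely in ruling out the degenerate single-vertex case, which does not occur in the context in which the claim is used.
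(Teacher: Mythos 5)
Your proof is correct and follows essentially the same route as the paper's: strong connectivity plus the absence of cycles of length at least~$3$ forces every arc into a digon, and the underlying simple graph is then a connected acyclic graph, i.e.\ a tree. Your shortest-path argument simply makes rigorous the paper's assertion that ``the only path from $y$ to $x$ is the directed arc $yx$,'' and your remark on non-triviality is a harmless extra check.
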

\begin{proof}
Since $\Gf$ is strongly connected, then for any arc $xy$ of $\Gf$ there exists a path from $y$ to $x$. As $\Gf$ only contains cycles of size 2, the only path from $y$ to $x$ is the directed arc $yx$. So every arc of $\Gf$ is contained in a digon. If $H$ is the underlying graph of $G'$ (without multiple edges) then it is clear that $H$ is a tree otherwise $\Gf$ would contain a cycle of size more than 2.
\end{proof}

\begin{claim}
\label{claim:digontree}
If $\Gf$ is a digoned tree or if $|V(\Gf)|=1$, then the optimal solution of $\Pip(\Gf)$ is $|V(\Gf)|-1$.
\end{claim}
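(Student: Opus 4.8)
The plan is to analyze what a feasible solution of $\Pip(\Gf)$ looks like when $\Gf$ is a digoned tree. A feasible solution $X$ must induce a functional digraph (every vertex has out-degree at most $1$) that is digon-free. Since every arc of $\Gf$ lies in a digon (because $\Gf$ is a digoned tree), the digon-free condition is restrictive: if $X$ contains $e_ie_j$ then it cannot contain $e_je_i$, so for each edge $\{e_i,e_j\}$ of the underlying tree $H$ at most one of the two arcs is taken. Hence $|X|$ is at most the number of edges of $H$, which is $|V(\Gf)|-1$. This gives the upper bound immediately; the $|V(\Gf)|=1$ case is trivial since then there are no arcs and $|V(\Gf)|-1 = 0$.

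For the matching lower bound, I would root the tree $H$ at an arbitrary vertex $\rho$ and, for every non-root vertex $v$, orient the edge $\{v, \mathrm{parent}(v)\}$ from $v$ towards its parent, i.e. put the arc $v\,\mathrm{parent}(v)$ into $X$. Then every non-root vertex has out-degree exactly $1$ in $X$ (its unique outgoing arc points to its parent), the root has out-degree $0$, so $X$ is a functional digraph; and since for each tree-edge we selected only the child-to-parent arc and never the parent-to-child arc, $X$ is digon-free. Thus $X$ is feasible with $|X| = |V(\Gf)|-1$, matching the upper bound, so this is optimal.

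I do not expect any serious obstacle here: the statement is essentially a structural observation, and both directions are short. The only mild subtlety is making sure the "functional digraph" requirement (out-degree exactly $1$, or at most $1$?) is handled correctly — the root necessarily has out-degree $0$, so one should read "functional" as "out-degree at most $1$", which is consistent with how the problem $\Pip$ is stated (the induced digraph being functional, with isolated or sink vertices allowed). If the intended reading were out-degree exactly $1$ everywhere, one would instead note that a digoned tree cannot admit such a subdigraph at all without creating a digon, and the claim would be vacuous or need reinterpretation; but the natural reading makes the proof go through cleanly as above.
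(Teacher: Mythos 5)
Your proof is correct and follows essentially the same route as the paper: the paper obtains the upper bound by observing that a digon-free subset of arcs of a digoned tree is acyclic (hence, being functional, has at most $|V(\Gf)|-1$ arcs), while you count at most one arc per digon, i.e.\ per tree edge, and both arguments realize the bound with an in-branching (your child-to-parent orientation of the rooted tree). Your reading of ``functional'' as out-degree at most one is also the paper's intended one, as witnessed by its use of in-branchings as feasible solutions of $\Pip$.
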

\begin{proof}
The case $|V(\Gf)|=1$ is clear. So assume that $\Gf$ is a digoned tree and let $X$ be a set of arcs of $\Gf$ corresponding to an optimal solution of $\Pip(\Gf)$. 
Then $X$ is acyclic and then has size at most $|V(\Gf)|-1$. Moreover, any in-branching of $\Gf$ provides a solution of size $|V(\Gf)|-1$.
\end{proof}

\begin{lemma}
\label{lemma:polyfunc}
Let $\Gf$ be a digraph with $n$ vertices. Denote by $S_1,\dots, S_p$ terminal strong components of $\Gf$  such that for any $i$ with $1\leq i \leq k$, $S_i$ is a digoned tree or an isolated vertex and for any $i>k$, $S_i$ contains a cycle of length at least 3. Then an optimal solution of  $\Pip(\Gf)$ has
size $n-k$ and we can construct one in polynomial time.
\end{lemma}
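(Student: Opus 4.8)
The plan is to establish the two inequalities separately: that \emph{every} solution of $\Pip(\Gf)$ has size at most $n-k$, and that a solution of size exactly $n-k$ can be built in polynomial time. The basic bookkeeping device is that a solution $X$ of $\Pip$ is precisely a set of arcs whose induced sub-digraph has out-degree at most $1$ at every vertex and contains no digon; hence $|X|=\sum_{v}d^+_X(v)=n-|\{v:d^+_X(v)=0\}|$, so proving the upper bound amounts to showing that every solution leaves at least $k$ vertices with out-degree $0$. I will also freely use the preceding claims: a strong component on at least two vertices that is not a digoned tree has a cycle of length $\ge 3$, and for such a component the routine in the proof of Claim~\ref{claim:C3} produces a sub-digraph covering all of its vertices with out-degree exactly $1$ and no digon.

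For the upper bound, fix a solution $X$ and look at a terminal strong component $S_i$ with $i\le k$. Because $S_i$ is terminal, every arc of $\Gf$ out of a vertex of $S_i$ stays inside $S_i$. If all vertices of $S_i$ had out-degree $\ge 1$ in $X$ then, since out-degrees are at most $1$, the arcs of $X$ leaving $V(S_i)$ would form a functional digraph on $V(S_i)$, which necessarily contains a directed cycle; that cycle lies inside $X$, hence is digon-free and has length $\ge 3$, contradicting that $S_i$ is a digoned tree or a single vertex. So each of $S_1,\dots,S_k$ contains a vertex of out-degree $0$ in $X$; these $k$ vertices are distinct because the $S_i$ are pairwise disjoint, giving $|X|\le n-k$.

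For the construction, call a vertex \emph{bad} if it cannot reach any $S_i$ with $i>k$ in $\Gf$, and \emph{good} otherwise. Every out-neighbour of a bad vertex is bad, so the bad vertices induce a sub-digraph containing all of their out-arcs, whose terminal strong components are exactly $S_1,\dots,S_k$. On the good part I would run the Claim~\ref{claim:C3} routine inside each $S_i$ with $i>k$, and then, using distances to $\bigcup_{i>k}S_i$ in $\Gf$, give every remaining good vertex one out-arc that strictly decreases this distance; this gives out-degree exactly $1$ at every good vertex and creates no digon (the distance strictly decreases along routing arcs, and the inner pieces use only internal arcs). On the bad part I would take an in-branching towards a chosen root $r_i$ inside each $S_i$ with $i\le k$ (for a trivial $S_i$, $r_i$ is its vertex), and, using distances to $\bigcup_{i\le k}S_i$ inside the bad sub-digraph, give every remaining bad vertex one distance-decreasing out-arc; this gives out-degree exactly $1$ everywhere except at $r_1,\dots,r_k$. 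Since all endpoints of the good-part arcs are good and all endpoints of the bad-part arcs are bad, the two parts are vertex-disjoint, so their union $X$ is still functional and digon-free; it has out-degree $1$ everywhere except the $k$ roots, hence $|X|=n-k$. All steps (strong-component decomposition, BFS for the distances, the Claim~\ref{claim:C3} routine, in-branchings) are polynomial.

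I expect the main obstacle to be the last step's bookkeeping: checking that gluing the inner coverings/branchings of the $S_i$ with the shortest-path routing never produces a digon and never raises some out-degree to $2$. The good/bad partition is the trick that makes this transparent, since it forces the two families of arcs onto disjoint vertex sets; and the place where the hypothesis really bites is the upper bound, namely the impossibility of ``saturating'' a terminal digoned-tree (or trivial) component.
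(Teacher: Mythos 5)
Your proof is correct and takes essentially the same route as the paper: the upper bound amounts to the fact that each digoned-tree or trivial terminal component forces a vertex of out-degree $0$ (the per-component optima of Claims~\ref{claim:C3} and~\ref{claim:digontree}), and the construction combines the within-component solutions with in-branchings/shortest-path routing toward the terminal components. Your good/bad partition and the explicit distance argument are just a more detailed bookkeeping of the paper's terser ``grow in-branchings in $\Gf$ from the union of the optimal solutions of $\Pip(\Gf[S_i])$'' step.
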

\begin{proof}
We can assume that $\Gf$ is connected otherwise we apply the result on every connected component of $\Gf$ and the disjoint union of the solutions produces an optimal solution on the whole digraph $\Gf$.
%
%If $\Gf$ is not connected, denote by $C_1, %\dots, C_p$ its connected components. For any %connected component $C_i$, we can use the %induction and we obtain that $\Pip(C_i) = n_i %- k_i$, where $n_i$ is the number of vertices %of $C_i$, $k_i$ the number of terminal strong %components of $C_i$ which are trees whose %intern edges are digons or isolated vertices. %Moreover, let $X_i$ be the solution of $C_i$. %Then, we can construct a solution $X$, with $X %= \bigcup_{1\leq i\leq p} X_i$. Therefore, we %have $|X| = \sum_{1\leq i \leq p}{n_i - k_i} = %n - k$.
%
So assume that $\Gf$ is connected and let $S$ be a terminal strong component of $\Gf$. If $X$ is an optimal solution of $\Pi'(G')$ then the restriction of $X$ to the arcs of $G'[S]$ is an optimal solution of $\Pi'(G'[S])$. Indeed otherwise we could replace this set of arcs in $X$ by an optimal solution of $\Pi'(G'[S])$ and obtain a better solution for $\Pi'(G')$, a contradiction.
%Suppose there exists a strongly connected %component $S$ of $\Gf$ such that $S$ is %initial but not terminal. Let $\St$ be $\Gf %\setminus S$. We make the induction on $\St$ %and we get a solution $\Xt$ such that $|\Xt| = %\nt - \kt$, where $\nt$ is the order of $\St$ %and $\kt$ the number of terminal strong %components of $\St$ which are trees whose %intern edges are digons or isolated vertices. %Since $\Gf$ is connected and $S$ initial %component, then there exists an arc $xy$ of %$\Gf$ such that $x \in \St$ and $y \in S$. %Moreover, we can make an in-branching in $S$ %rooted in $y$ and let $X_S$ be the set of its %arcs. Thus, we can define $X$ by $X= X_S \cup %\Xt \cup \{xy\}$. Then, we have $|X| = (n - %\nt - 1) + ( \nt - \kt) + 1 = n - \kt = n - %k$.
So by Claim~\ref{claim:C3} and Claim~\ref{claim:digontree} the set $X$ contains at most $\sum_{i=1,\dots ,p} |S_i| - k$ arcs lying in a terminal component of $\Gf$.
Now as every vertex of $\Gf\setminus \bigcup_{i=1,\dots ,p} S_i$ is the beginning of at most one arc of $X$, the set $X$ has size at most $n-k$. Conversely by growing in-branchings
in $\Gf$ from the union of the optimal solutions of $\Pi'(\Gf[S_i])$ for $i=1,\dots ,p$,  by Claim~\ref{claim:C3} and~\ref{claim:digontree} we obtain a solution of $\Pi'(\Gf)$ of size $n-k$ which is then optimal. Moreover, this solution can clearly be built in polynomial time.
%Moreover, we can make an in-branching in $S$ %rooted in $y$ and let $X_S$ be the set of its %arcs. Thus, we can define $X$ by $X= X_S \cup %\Xt \cup \{xy\}$. Then, we have $|X| = (n - %\nt - 1) + ( \nt - \kt) + 1 = n - \kt = n - k$.
%
%Therefore, for any strongly component $S$ of %$\Gf$, $S$ is both initial and terminal, then %$\Gf$ is strongly connected and $k \leq 1$. If %$\Gf$ contains a cycle of size at least $3$, %then $k=0$ and we can now use %the~\autoref{claim:C3} to conclude the proof. %Otherwise, $k=1$ and~\autoref{claim:digontree} %yields the result.
\end{proof}

In all using Claim~\ref{claim:equivPiPi'} and Lemma~\ref{lemma:polyfunc} we can solve \maxtst~in polynomial time.

\begin{lemma}
\label{lemma:maxcst}
In a fully sparse tournament $\T$ the size of a maximum cycle packing is equal to the size of a maximum triangle packing.
\end{lemma}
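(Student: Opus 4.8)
The plan is to prove the two inequalities separately. One direction is immediate: every triangle is a directed cycle, so every triangle packing is a cycle packing and a maximum cycle packing is at least as large as a maximum triangle packing. For the converse, I would not try to transform a cycle packing into a triangle packing directly, because shortcutting a non‑triangle cycle $C$ to a triangle creates one ``new'' forward arc that may clash with another cycle, and controlling this seems to need an alternating‑path argument; instead I would bound the size of an arbitrary cycle packing by $b-k$, where $b=|\bA{A}(\T)|$ is the number of backward arcs and $k$ is the number of terminal strong components of $\Gf$ that are digoned trees. Indeed, by Claim~\ref{claim:equivPiPi'} and Lemma~\ref{lemma:polyfunc} a maximum triangle packing of a fully sparse $\T$ has size exactly $b-k$ (there are no isolated‑vertex terminal components of $\Gf$: since $\T$ is fully sparse and no backward arc joins two consecutive vertices, every backward arc has an internal vertex, which is the endpoint of another backward arc, so every vertex of $\Gf$ has out‑degree at least $1$), so it suffices to show that any cycle packing $\SC$ of $\T$ satisfies $|\SC|\le b-k$.

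The next step is to record the structural facts about a terminal strong component $S$ of $\Gf$ with $r$ vertices, i.e.\ $r$ backward arcs $e_1,\dots,e_r$. Terminality means that every vertex of $\T$ lying strictly between the endpoints of some $e_i$ is the endpoint of a backward arc in $S$; from this I would deduce (i) the $2r$ endpoints of $e_1,\dots,e_r$ form an \emph{interval} $W_S$ of $\V(\T)$; (ii) the only backward arcs of $\T$ with an endpoint in $W_S$ are $e_1,\dots,e_r$, so that $\T=\T_L\,\T[W_S]\,\T_R$ with every arc between consecutive blocks being forward; and (iii) when $S$ is a digoned tree, no two of the intervals spanned by the $e_i$ are nested (a strictly nested pair would give an arc of $\Gf$ not belonging to a digon), hence they are pairwise crossing or disjoint with connected crossing graph, which forces that crossing graph to be a \emph{path} $e_1-\dots-e_r$ in ``staircase'' order, and terminality then pins down $\T[W_S]$ uniquely.

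With these facts, the core is the claim: for each digoned‑tree terminal component $S$ with $r$ arcs, \emph{at most $r-1$ cycles of $\SC$ use an arc of $S$}. Summing this over the $k$ pairwise‑disjoint such components, and bounding the remaining cycles by the number of backward arcs lying in no such component, gives $|\SC|\le \bigl(b-\sum r_\ell\bigr)+\sum(r_\ell-1)=b-k$. To prove the claim I would argue by contradiction: if $r$ or more cycles meet $S$ then, since $S$ has only $r$ arcs and the cycles are arc‑disjoint, there are exactly $r$ cycles $C_1,\dots,C_r$, each using exactly one arc of $S$, say $e_j\in A(C_j)$. Consider a leaf $e_\ell$ of the staircase path: its interval contains a single internal vertex $z$, and $h(e_\ell),z,t(e_\ell)$ are consecutive in $\V(\T)$. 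Using that a cycle can enter $W_S$ only ``from the left'', can descend inside $W_S$ only through an arc of $S$ (hence, for $C_\ell$, only once, through $e_\ell$), and cannot revisit a vertex, one shows $C_\ell$ is forced to be the triangle $(h(e_\ell),z,t(e_\ell))$; peeling leaves off the path one gets that each $C_j$ is a short triangle, and two consecutive such triangles share a forward arc of $\T$ — contradicting arc‑disjointness.

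The delicate point, which I expect to absorb most of the work, is the ``pinning'' of $C_\ell$: a priori a cycle $C_j$ using a single arc of $S$ may also use backward arcs of $\T$ lying entirely outside $W_S$ (arcs ``spanning over'' $W_S$, which necessarily exist when $\T$ is strongly connected and $W_S\neq V(\T)$), so it might weave in and out of $W_S$. To handle this I would first reduce by induction on $|V(\T)|$ to the case $\T$ strongly connected (otherwise $\T=\T_1\T_2$ with no backward arc between the blocks, each block is fully sparse, cycle and triangle packings split, and induction finishes), then dispose quickly of the case where $\Gf$ is strongly connected with a cycle of length $\ge 3$ (Claim~\ref{claim:C3} already gives maximum triangle packing $=b$, so it equals the maximum cycle packing), of the case where $\Gf$ is the digoned path (then $W_S=V(\T)$, no weaving is possible, and the clean peeling argument applies verbatim), and finally carry out the weaving analysis in the remaining case, showing via the forbidden re‑entries into $W_S$ and the exact staircase positions that $C_\ell$ is nonetheless pinned to the triangle on $\{h(e_\ell),z,t(e_\ell)\}$.
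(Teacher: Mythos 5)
Your setup (easy direction, reducing to the bound $|\SC|\le b-k$, the fact that each digoned-tree terminal component of $\Gf$ spans an interval of $\V(\T)$ and is in fact a ``staircase'' digoned path) coincides with the paper's. But your core claim --- that for an \emph{arbitrary} cycle packing $\SC$ at most $r-1$ cycles use an arc of a digoned-tree terminal component $S$ with $r$ arcs --- is false, and with it the pinning of $C_\ell$ to the triangle $(h(e_\ell),z,t(e_\ell))$ that your ``weaving analysis'' is supposed to deliver. Concretely, take the fully sparse tournament on six vertices ordered $(a,x_1,x_2,x_3,x_4,b)$ with backward arcs $x_3x_1$, $x_4x_2$ and $ba$. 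Here $\Gf$ has a unique terminal strong component $S=\{x_3x_1,x_4x_2\}$, a digon, so $r=2$ and $k=1$; the tournament is strongly connected, $\Gf$ has no cycle of length $\ge 3$ and is not a digoned path, so this lies exactly in your ``remaining case''. Yet the two arc-disjoint cycles $(x_1,x_2,x_3)$ and $(x_2,b,a,x_4)$ each use one arc of $S$, so $r$ cycles meet $S$; moreover the second cycle uses the leaf arc $x_4x_2$ but is \emph{not} the triangle $(x_2,x_3,x_4)$ --- it weaves out of the interval through the spanning backward arc $ba$. So the statement you plan to prove in the last step is not merely delicate, it is wrong; your final accounting $|\SC|\le(b-\sum r_\ell)+\sum(r_\ell-1)$ is not supported by it (in the example it only happens to hold because the weaving cycle also consumes the outside backward arc $ba$).

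The repair is precisely where the paper puts its effort. Either (as in the paper) first modify the optimal cycle packing: for the component $S$ with interval $(x_1,\dots,x_l)$, replace every cycle that uses a spanning backward arc $e$ (head before $x_1$, tail after $x_l$) by the triangle $(h(e),x_1,t(e))$; one checks this preserves size and arc-disjointness, and afterwards any cycle meeting $S$ uses only backward arcs of $S$, hence lives in $\T[\{x_1,\dots,x_l\}]$, where the $l/2-1=r-1$ arcs $x_ix_{i+1}$ ($i$ even) form a FAS, giving the $r-1$ bound. Alternatively, keep the packing fixed but change the charging scheme: only cycles \emph{all} of whose backward arcs lie in $S$ (these are confined to the interval, so at most $r-1$ of them by the same local FAS) are charged to $S$, while any cycle that meets $S$ but also uses a backward arc outside every terminal component is charged to that outside arc; both versions yield $|\SC|\le b-k$, but your current claim and the leaf-pinning argument do not.
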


\begin{proof}
First if $\T$ has an optimal triangle packing of size $|\bA{A}(\T)|$ then as $\bA{A}(\T)$ is a feedback arc set of $\T$, every optimal cycle packing of $\T$ has size $|\bA{A}(\T)|$. 
Otherwise, we build from $\T$ the digraph $G'$ as previously. 
By Lemma~\ref{lemma:polyfunc}, $G'$ has some terminal components $S_1,\dots ,S_k$ which are either a single vertex or induces a digoned tree and every optimal triangle packing of $\T$ has size $|\bA{A}(\T)|-k$.
Let see that no $S_i$ can be a single vertex. Indeed if $S_i=\{e\}$ where $e$ is a backward arc of $\T$, it means that no backward of $\T$ begins or ends between $h(e)$ and $t(e)$ in $\V(\T)$.
As $\T$ is fully sparse, it means that $h(e)$ and $t(e)$ are consecutive in $\V(\T)$ what we forbid previously. Now consider a component $S_i$ which induces a digoned tree in $G'$. Let $\sigma_i$ be the order $\V(\T)$ restricted to the heads and tails
of the arcs of $\T$ corresponding to the vertices of $S_i$. First notice that $\sigma_i$ is an interval of the order $\V(\T)$. Indeed otherwise there exists
two backward arcs $a$ and $b$ of $\T$ such that $a\in S_i$, $b\notin S_i$ and $h(a)$ is before the head or the of $b$ which is before $t(a)$ in $\V(\T)$. But in this case there is an arc in $G'$ from $a$ to $b$ contradicting the fact that $S_i$ is a terminal component of $G'$. 
So we denote $\sigma_i$ by $(x_1,x_2,\dots ,x_l)$ and notice that $x_1$ and $x_2$ are then forced to be the heads of backward arcs belonging to $S_i$. 
If $x_3$ is also the head of backward arc of $S_i$, then we obtain that the three corresponding backward arcs form a 3-cycle in $G'$ contradicting the fact that $S_i$ induces a digoned tree in $G'$. 
Repeating the same argument we show that $l$ is even and that the backward arcs corresponding to the elements of $S_i$ are exactly $x_3x_1$, $x_lx_{l-2}$ and $x_jx_{j-3}$ for all odd $j\in[l]\setminus \{1,3\}$. In other words $S_i$ induces a 'digoned path' in $G'$. Now consider $\S$ an optimal cycle packing of $\T$. Let $X_1$ be the set of backward arcs of $\bA{A}(\T)$ with head strictly before $x_1$ and tail strictly after $x_l$ in $\V(\T)$. And let $\S_1$ be the cycles of $\S$ using at least one arc of $X_1$. It is easy to check that $\S'=(\S \setminus \S_1) \cup \{(h(e),x_1,t(e))\ :\ e\in X_1\}$ is also an optimal cycle packing of $\T$.
Now every cycle of $\S'$ which uses a backward arc of $S_i$ only uses backward arcs of $S_i$ (otherwise it must one arc of $X_1$, which is not possible). Let $\S_i$ be the set of cycles of $\S$ using backward arcs of $S_i$. It is easy to see that $\{x_ix_{i+1}\ :\ i\text{ even and }i\in [l-2]\}$ is a feedback arc set of $\T[\{x_1,\dots ,x_l\}]$ and has size $l/2-1=|S_i|-1$. So we have $|\S_i|\le 
|S_i|-1$.\\ Repeating this argument for $i=1,\dots ,k$ we obtain that $|\S|\le |\bA{A}(\T)|-k$. Thus by Lemma~\ref{lemma:polyfunc} $\S$ has the same size than an optimal triangle packing of $\T$.
\end{proof}

%%
%% Bibliography
%%

\end{document}